\newtheorem{theorem}{Theorem}[section]
\newtheorem{definition}[theorem]{Definition}
\newtheorem{problem}[theorem]{Problem}
\newtheorem{assumption}[theorem]{Assumption}
\newtheorem{lemma}[theorem]{Lemma}
\newif\if@firstthanks
\def\@IEEEtriggeroneshotfootnoterule{
  \if@firstthanks
    \global\@firstthanksfalse
    \par
    \vspace{0.3em}
    \noindent\rule{0.22\columnwidth}{0.4pt}
     \par
    \vspace{0.55em}

  \fi}
\def\@makefnmark{\hbox{\@textsuperscript{\normalfont\@thefnmark}}}
\long\def\thanks#1{
  \footnotemark
  \protected@xdef\@thanks{\@thanks
    \protect\footnotetext[\the\c@footnote]{
      \protect\@IEEEtriggeroneshotfootnoterule
      \ignorespaces#1}}
}
\let\@thanks\@empty
\begin{document}
\date{}
\title{Control and Stability of a Multilevel Power System for a Future Distribution Network}
\author{Xian Wu\textsuperscript{$*$}{,}\thanks{\textsuperscript{$*$}
Xian Wu is with the School of Mathematics, Shandong University, Jinan, 250100, Shandong Province, China, and visited the Department of Applied Mathematics, Delft University of Technology, Delft, The Netherlands from November 2023 till October 2025. Xian Wu is grateful to the China Scholarship Council for financial support under Grant 202306220188 for her stay at Delft
University of Technology, The Netherlands(email: xianwu@mail.sdu.edu.cn).}
\and
Jan H. van Schuppen\textsuperscript{$\dagger$}{,}\thanks{\textsuperscript{$\dagger$}Jan H. van Schuppen is currently at Gouden Leeuw 143, 1103 KB Amsterdam, The Netherlands (email: vanschuppenjanh@freedom.nl).}
\and Hai Xiang Lin\textsuperscript{$\ddagger$}
\thanks{\textsuperscript{$\ddagger$}Hai Xiang Lin is with the Department of Applied Mathematics, Delft University of Technology, Delft, 2628 CD, The Netherlands (email: H.X.Lin@tudelft.nl).}~}
\maketitle

\begin{abstract}
     The growing integration of renewable energy sources into distribution networks poses significant challenges to frequency and voltage stability due to their intermittent nature and low-inertia dynamics. 
    This paper proposes a multilevel control framework for a future decarbonized power system, using energy storage systems as power buffers to mitigate frequency and voltage fluctuations. 
    A nonlinear interconnected model is formulated to characterize the complex dynamics across multiple levels of the distribution network.    
    To reduce operational complexity and communication overhead of these dynamics, a distributed linear quadratic regulator control strategy is developed for information exchange in a bottom-up approach, where each level implements local feedback control within a short time horizon.
    Stability conditions for both open-loop and closed-loop systems are established using Lyapunov-based analysis. 
    In addition, explicit performance bounds are derived to quantify the optimal difference between the proposed distributed strategy and the centralized control method, demonstrating the effectiveness of the proposed framework.
\end{abstract}
\begin{keywords}
Multilevel control system, nonlinear systems, power distribution network, \\
\hspace*{5.2em}stability analysis.
\end{keywords}

\section{Introduction}
\label{sec:introduction}
The increasing penetration of sustainable energy technologies is transforming conventional power systems into future decarbonized power grids. The distribution networks of these future power systems incorporate multiple renewable energy sources (RES), energy storage systems (ESS) and loads, all interacting within the same infrastructure \cite{Alobaidi,Panigrahi,dorfler2023}. 
While various modeling approaches have been developed to describe the physical interconnections and dynamic couplings of these components \cite{Tomsovic,Rick}, stability analysis of the resulting models is analytically difficult due to their nonlinear behavior. Additionally, the intermittent and low-inertia properties of RES introduce significant challenges to the frequency and voltage stability.
To address these problems, this paper proposes a multilevel control system that enables detailed dynamic modeling and rigorous stability analysis of a nonlinear power system for a distribution network. 
\par
While ensuring asymptotic stability of the nonlinear system is essential, economic efficiency and technical feasibility of system operation are also important considerations. Consequently, the determination of the corresponding synchronous state is formulated as an optimal power flow (OPF) problem, which has been extensively studied for radial distribution networks under physical and operational constraints \cite{Low2013,Low2014,Molzahn,Lijun}.
A detailed analysis of various power flow models and their convex relaxations is presented in \cite{Low2013}, where structural properties and equivalence conditions are established. Two special power flow models and their relaxations are further discussed in \cite{Low2014}, emphasizing the importance of convex structures. However, by focusing exclusively on steady-state analysis, the OPF-based formulation fails to capture the dynamic behavior of distribution networks.
\par
To reflect the state dynamics neglected in static frameworks, many studies focus on linearized models around a synchronous state. For example, a small-signal state-space model is developed for the current-source inverter of a photovoltaic generator to analyze its local dynamic behavior \cite{Rahman}. For low-inertia power systems, a full-order control problem is formulated for both synchronous and converter-based generators, incorporating device modeling, control schemes, as well as the dynamics of transmission lines and loads \cite{Markovic}. The interactions between local controllers and varying operating points critically affect the stability of the power system under various penetration levels of RES. These studies mainly analyze system dynamics under small-signal disturbances, but the transient dynamic response under rapid fluctuations of RES is not sufficiently addressed.
\par
These limitations motivate a detailed characterization of the transient dynamics of grid-forming converters connecting RES to the main grid \cite{Khan}.
A novel distributed voltage control method is designed to achieve reactive power sharing among converters \cite{Schiffer2014} under the necessary and sufficient conditions for stability. A Lyapunov characterization of almost global stability for dispatchable virtual oscillator control converters is obtained without line dynamics \cite{Colombino} and with line dynamics \cite{Dominic}, providing stability conditions on network parameters and control gains. Extensions to non-nominal synchronous states and to non-uniform networks have also been studied \cite{Xiuqiang}, yielding parametric conditions for the existence and the stability of synchronous states. 
\par
Although significant progress has been made in local converter control, recent studies focus on the coordination of grid-forming converters in distribution networks. 
A centralized control scheme has been proposed to coordinate converters and regulate voltage by collecting information from all photovoltaic converters \cite{Ciocia,Bidgoli}.
Because such an approach collects the states of all components, it suffers from severe limitations in scalability and robustness for large-scale distribution networks. These restrictions motivate the development of distributed control strategies, where each controller relies on local measurements and exchanges information with neighboring controllers. This structure reduces communication requirements and computational complexity, and enhances resilience to network failures.
\par
Among various distributed control approaches, multilevel control systems are designed based on the hierarchical structure of distribution networks. This control hierarchy originates from telephone networks \cite{Tanenbaum}, where the controller of each level operates independently and exchanges information with those at the lower level and at the next higher level. This architecture is widely used in small-scale distribution networks with two levels \cite{Plytaria,Kouveliotis,Sachs}, where controllers regulate local voltage and frequency and achieve coordination. 
\par
However, existing studies focus on the synchronous state, the small-signal stability of the linearized power system, or detailed modeling of converters for two-level distribution networks. Few studies address multilevel architectures or present a rigorous mathematical analysis of the transient stability for a large-scale distribution network. Therefore, accurate mathematical modeling of the distribution network is still required to describe its interconnections and to formulate system dynamics for stability analysis.
\par
This paper addresses these gaps by proposing a multilevel control framework for a future decarbonized power system. Specifically, we propose a six-level control system for a large-scale distribution network, integrating RES at each level and ESSs at several levels. 
These ESSs provide control inputs to regulate frequency and voltage at higher levels. At discrete time intervals (e.g., every five minutes), aggregated power demand information is communicated from a lower level to its next higher level. Each level uses feedback control based only on its own state to regulate frequency and voltage within a short time horizon. The main contributions of this paper are summarized as follows. 
\begin{enumerate}
\item A novel multilevel control system of a large-scale distribution network for a future power system.
\item A nonlinear dynamic model of a multilevel distribution network.
\item Rigorous stability conditions for both open-loop and closed-loop systems using Lyapunov-based analysis.
\item Explicit performance bounds quantifying the difference in optimal quadratic cost between the proposed distributed and centralized control methods.
\end{enumerate}
\par
The remainder of the paper is organized as follows. In Section \ref{sec:problem}, we formulate the problem of a distribution network for a future power system and introduce the control objectives of a multilevel power system. Section \ref{sec:multictrl} introduces the multilevel system structure, followed by Section \ref{sec:high-level power system}, which provides the detailed dynamic modeling for Levels 3, 4, and 5. The stability conditions for network parameters of the open-loop power system are investigated in Section \ref{sec:open_loop}, while Section \ref{sec:distributed_control} develops a distributed linear quadratic regulator (LQR) controller for the isolated power system of higher levels. In Sections \ref{sec:closed-loop} and \ref{sec:performance}, we analyze the stability of the closed-loop system under the proposed distributed control framework and evaluate the optimal performance difference between distributed and centralized control methods. Finally, conclusions are provided in Section \ref{sec:conclusion}.
\section{Problem formulation}\label{sec:problem}
A power system consists of a transmission network and multiple distribution networks. Traditionally, power flows from the transmission network to the distribution networks through their points of common coupling (PCC). In future power systems, advanced control strategies will be required for distribution networks that supply power directly to end users, due to the increasing integration of distributed energy resources in distribution networks.
\par
This paper is motivated by the transition of distribution networks from conventional generation to RES, such as solar, wind, and biomass generators. These sources raise major concerns about the stability of future power systems due to their intermittent nature and low-inertia properties. 
Therefore, it is necessary to design a control framework that uses large-capacity ESSs as buffers to mitigate frequency and voltage fluctuations in large-scale distribution networks.
\par
\begin{problem}{\em Problem formulation for the distribution network of a future power system.}
 \begin{enumerate}
    \item How to formulate a multilevel control system for a distribution network of a future power system?
    \item How to synthesize controllers of the combined power system at higher levels of the multilevel control system?
  \end{enumerate}
\end{problem}
\par
The overall time horizon is partitioned into a sequence of five-minute periods, during which power flows between adjacent levels are assumed to be constant. 
Over much shorter time horizons, small deterministic disturbances of power sources and loads affect the transient dynamics of frequencies and voltages, which are modeled as a continuous-time system.
To address these challenges, the control objectives of each level are specified systematically.
\begin{definition}{\em Control objectives of a multilevel power system.} 
    \begin{enumerate}
        \item Ensure that the power supply equals the predicted power demand for every five minutes.
        \item Reduce the variance of voltages and of frequencies at each level within every five minutes.
    \end{enumerate}
\end{definition}
\par 
The first objective ensures power balance by scheduling the power supply to meet the predicted demand at each level. 
The second objective requires coordinated control methods across multiple levels, where each controller relies on local measurements and on the aggregated power demand of other controllers.
\section{Multilevel power system of a distribution network}\label{sec:multictrl}
In this section, we formulate a multilevel power system and develop a control synthesis for a distribution network. The concept of multilevel control systems \cite{Tanenbaum} is extended to distribution networks because of their similar multilevel structure. In such a distribution network, each level is equipped with a controller, which collects only local information and regulates the power system within that level. This hierarchical structure reduces the requirements for information processing and decreases the control complexity of the entire system.
\par
Based on the above multilevel structure, a specific multilevel power system is proposed according to the geographical extent and the estimated order of magnitude of power demand. We define six levels of the distribution network as follows. The corresponding number of inhabitants for each level is also listed as an academic example.
\begin{definition}{\em Levels of the multilevel control system for a distribution network.}
\begin{itemize}
  \item {\em Level 0.} An electric building. There are several types of electric buildings, including urban houses, farms, shops, factories, office buildings, and small, medium, and large industrial companies. 
  \item {\em Level 1.} A street with 100 electric buildings. This street is assumed to accommodate 300 inhabitants if three inhabitants live in each building, although this number can vary between 200 and 500.
  \item {\em Level 2.} A neighborhood with 10 streets. The neighborhood is assumed to have 3,000 inhabitants.
  \item {\em Level 3.} A town with 10 neighborhoods. The town is assumed to have 30,000 inhabitants.
  \item {\em Level 4.} A city with 10 towns. The city is assumed to have 300,000 inhabitants. 
  \item {\em Level 5.} A region with 10 cities. The region is assumed to have 3,000,000 inhabitants.
\end{itemize}
\end{definition}
\par
To characterize the physical interconnections of this multilevel structure, the radial distribution network is modeled as a tree graph $\mathcal{G}=(\mathcal{V},\mathcal{E})$, where $\mathcal{V}$ is the set of $n$ nodes and $\mathcal{E}\subset\mathcal{V}\times\mathcal{V}$ denotes the set of lines. In this graph, buses correspond to nodes and power lines correspond to branches. Each node is connected to its unique parent and a set of children, except for the PCC and the electric buildings. The PCC serves as the root of the radial distribution network, while buildings are modeled as leaf nodes.
\par
We denote by $\mathcal{L} = \{0,1,\dots,5\}$ the set of level indices, and by $\mathcal{L}_a$ the set of nodes at Level $a\in\mathcal{L}$.
Therefore, the set of all nodes is
\begin{equation*}
    \begin{aligned}
        \mathcal{V} = \bigcup_{a\in\mathcal{L}}  \mathcal{L}_a,~~ \mathcal{L}_a\bigcap\mathcal{L}_b=\emptyset, ~~ \forall a,b\in\mathcal{L},\ a\neq b.
    \end{aligned}
\end{equation*}
\par
This indexing method not only reflects the physical connections, but also describes the communication between adjacent levels. 
\par
In this paper, we focus on the higher levels $\mathcal{L}_H = \{3,4,5\}$. The corresponding node set is defined as $\mathcal{V}_H=\bigcup_{a \in \mathcal{L}_H} \mathcal{L}_a$, which contains $n_H = |\mathcal{V}_H|=111$ nodes. The multilevel structure and notation defined above provide the analytical foundation for the dynamic modeling and control synthesis presented in the following section.
\section{Power system of Levels 3, 4, and 5}\label{sec:high-level power system}
At Level 5, all local sources, loads, and ESS are connected to a node called the Level 5 bus. Similarly, corresponding buses are defined for Levels 4, 3, 2, 1, and 0. 
\par
As the highest level, Level 5 is assumed to comprise a nuclear power plant, a solar park, a wind park, and a biomass generator.
The total demand of Level 5 consists of its directly connected local loads and the aggregated demand of its child nodes at Level 4. The dynamics of the nuclear plant and of the biomass generator are formulated by classical second-order swing equations, providing physical inertia to the Level 5 bus. Other power sources are connected to the grid through grid-forming converters, which are modeled as virtual synchronous machines to provide inertia and damping. 
\par
Furthermore, a large-capacity ESS is used to store the excess power from RES, and to provide sufficient power for frequency and voltage regulation.
\par
The modeling approach of Levels 4 and 3 is analogous to that of Level 5, except that there is no nuclear power plant, because the integrated RES and ESSs provide sufficient power to meet the demands at these levels. Particularly, inertia decreases from Level 5 to Level 3 since the power demand decreases from higher levels to lower levels. In addition, the power demand of Level 2 is considered a part of the load of Level 3.
\par
Overall, the model of the power sources and the power loads of each level allows the buses of higher levels to exhibit similar dynamics to those of the transmission network with synchronous generators. In this paper, we will investigate these dynamic behaviors of higher levels for the distribution network.
\subsection{Combined power system of Level 3, 4, and 5}
We assume that the phase angle, the frequency, and the voltage of all power sources and loads at each level are synchronized at the corresponding Level bus within a five-minute period.
The aggregated inertia and damping characterize the dynamics of synchronous machines and power loads, incorporating the virtual inertia provided by grid-forming converters. 
\par
We also assume that the power flows are balanced between adjacent levels, as well as from the PCC of the transmission network to Level 5. Therefore, the dynamics of each Level bus are related to the dynamics of power flows between the buses of lower levels and those of its next higher level.
\par
Therefore, the dynamics of the states at each level represent an aggregation of the dynamics of the local power sources and the loads. In addition, the active power and reactive power of the ESS are treated as control inputs to the dynamics of the Level bus.
Consequently, the dynamics of each Level bus are described by the following differential equations. These equations are analogous to the node dynamics of a transmission network studied in \cite{dorfler2023,Venkatasubramanian}, where RES and ESSs are not considered.
\begin{definition}{\em Power system of the Level buses for Levels 3, 4, and 5.} 
During a five-minute period, the aggregated dynamics of Levels 3, 4, and 5 are defined based on interconnections with respect to a reference rotating frame at the nominal system frequency. These dynamics include the phase angle, the frequency, and the voltage amplitude of each Level bus, which are denoted by $(\theta_{i},\omega_{i},v_{i})$ for all  $i\in\mathcal{L}_a, a\in\mathcal{L}_H$, 
\par  
     \begin{align} \label{combined_eq} 
        \frac{\text{d}\theta_i(t)}{\text{d}t}&=\omega_i(t)\nonumber,\\ m_i\frac{\text{d}\omega_i(t)}{\text{d}t}&=-d_i\omega_i(t)+P_i-\sum_{j\in\mathcal{V}_H} P_{i,j}(\theta,v)+u_{i,\omega}(t),\nonumber\\ \tau_i\frac{\text{d}v_i(t)}{\text{d}t}&=-k_iv_i(t)+Q_i-\sum_{j\in\mathcal{V}_H} Q_{i,j}(\theta,v)+u_{i,v}(t),\nonumber\\ 
    P_i&=P_{sp,i}-P_{ld,i},\\ Q_i&=Q_{sp,i}-Q_{ld,i},\nonumber\\ 
    P_{sp,i}&=P_{sol,i}+P_{wind,i}+P_{bm,i},\nonumber\\ 
    Q_{sp,i}&=Q_{sol,i}+Q_{wind,i}+Q_{bm,i},\nonumber\\ 
    P_{i,j}(\theta,v)&=v_iv_j\left[g_{ij}\cos(\theta_i-\theta_j)+b_{ij}\sin(\theta_i-\theta_j)\right],\nonumber\\ 
    Q_{i,j}(\theta,v)&=v_iv_j\left[g_{ij}\sin(\theta_i-\theta_j)-b_{ij}\cos(\theta_i-\theta_j)\right].\nonumber 
\end{align}
    Here $m_i$ and $d_i$ denote the aggregated inertia and damping coefficients associated with the frequency dynamics, while $\tau_i$ and $k_i$ are the time constant and damping coefficient associated with the voltage dynamics. 
    Denote by $P_{sol,i}$, $P_{wind,i}$, $P_{bm,i}$ the predicted active power supply from the solar park, the wind park and the biomass generator respectively, and by $P_{ld,i}$ the predicted active power load of Level $i$ over a five-minute period. Similarly, denote by $Q_{sol,i}$, $Q_{wind,i}$, $Q_{bm,i}$, and $Q_{ld,i}$ the corresponding reactive power of these sources and loads. For $i\in \mathcal{L}_5$, $P_{sp,i}$ and $Q_{sp,i}$ also include the power supply from the nuclear power plant. \par 
    Denote by $P_{i,j}$ and $Q_{i,j}$ the dynamic active and reactive power flows between adjacent levels, which are described by phase angle differences and voltage magnitudes. $g_{ij}$ and $b_{ij}$ are the conductance and susceptance between the Level bus $i$ and $j$, respectively. $u_{i,\omega}(t)$ and $u_{i,v}(t)$ denote the active power and reactive power inputs of the ESS.
\end{definition}
\subsection{Isolated power systems of Levels 3, 4, and 5}
In this subsection, we assume that over a short time horizon, the power supply from a higher-level bus to its lower-level buses is fixed, while the power demand of Levels 0, 1, and 2 has been predicted. We further assume that there is sufficient power supply from PCC to the Level 5 bus. Consequently, the power flows $(P_{i,j},Q_{i,j}),~\forall (i,j)\in\mathcal{E}$ between adjacent levels can be regarded as quasi-static and approximately constant. Therefore, each level can be modeled as an isolated system centered around its Level bus, while disturbances from local power sources, local loads, and the power demand of lower levels are treated as exogenous within the same time horizon.
\par
Under these assumptions, the nonlinear coupling terms in the combined power system \eqref{combined_eq} can be treated as constants throughout this horizon. Therefore, \eqref{combined_eq} reduces to isolated linear power systems for Levels 3, 4, and 5.
\begin{definition}{\em Isolated power system at Level $a$.} For any level $a \in \mathcal{L}_H$,
the Level-$a$ isolated power system is defined under the assumption that power flows between adjacent levels are fixed over a short time horizon. For all $i\in\mathcal{L}_a$,
\begin{align}\label{isolated} 
    \frac{\text{d}\theta_i(t)}{\text{d}t}&=\omega_i(t),\nonumber\\ 
    m_i\frac{\text{d}\omega_i(t)}{\text{d}t}&=-d_i\omega_i(t)+P_i-\sum_{j\in\mathcal{V}_H}P_{i,j}+u_{i,\omega}(t),\nonumber\\ 
    \tau_i\frac{\text{d}v_i(t)}{\text{d}t}&=-k_iv_i(t)+Q_i-\sum_{j\in\mathcal{V}_H}Q_{i,j}+u_{i,v}(t),\nonumber\\ 
     P_i&=P_{sp,i}-P_{ld,i},\\ Q_i&=Q_{sp,i}-Q_{ld,i},\nonumber\\ 
     P_{sp,i}&=P_{sol,i}+P_{wind,i}+P_{bm,i},\nonumber\\ 
     Q_{sp,i}&=Q_{sol,i}+Q_{wind,i}+Q_{bm,i},\nonumber 
    \end{align}
where $P_{i,j}$ and $Q_{i,j}$ represent the fixed power flows between adjacent levels. 
        \end{definition}
\section{Stability analysis of the open-loop power system}\label{sec:open_loop}
A power system can be modeled as a nonlinear dynamic system, whose behavior is characterized by the properties of its synchronous state. These properties include:
\begin{enumerate}
    \item Existence of a synchronous state \cite{Kumagai},
    \item The local asymptotic and transient stability of this state, 
    \item The domain of attraction associated with the synchronous state.
\end{enumerate}
\par
Although the existence of a synchronous state and its domain of attraction are important for understanding the overall system behavior, this section assumes that the synchronous state of the open-loop system exists and focuses on its local asymptotic stability. 
For an inner approximation of the domain of attraction, see \cite{Vannelli}.
\par
Here, we consider the autonomous dynamics of system \eqref{combined_eq} by setting the control inputs to zero, i.e., $u_{i,\omega}=0$ and $u_{i,v}=0$.
Analyzing the stability of this system reveals its inherent dynamic properties and provides guidelines for designing distributed controllers in the subsequent section. 
\par 
Let ${\bm x}=\text{col}({\bm x_i})\in\mathbb{R}^{3n_H-1}$ denote the state vector, where ${\bm x_i}=({\theta_i},\omega_i,{v_i})^\top$ represents the phase angle, frequency, and voltage amplitude of node $i$. $\bm x(t,0,\bm x_0)$ denotes the state of \eqref{combined_eq} at time $t$ when the initial state at $t=0$ equals $\bm x_0$. In the following, we define the synchronous state of the power system.
 \begin{definition}{\em Synchronous state of the power system.} 
\par 
A state $\bm x^*=\text{col}({\bm x_i^*})$ with $\bm x_i^*=(\theta_i^*,\omega_i^*,v_i^*)^\top$ is called a synchronous state if, for all $i\in\mathcal{V}_H$, 
\begin{equation*} 
    \dot{\theta_i}=0,~~ \dot{\omega_i}=0,~~ \dot{v_i}=0. 
\end{equation*} 
\par 
Thus, the synchronous state satisfies the steady equations, 
\begin{equation*} 
    \begin{aligned} 
        \omega_i^*&=0,\\ 
        \sum_{{(j,i)\in\mathcal{E}}} P_{i,j}(\theta^*,v^*)&=P_i,\\ 
        \sum_{{(j,i)\in\mathcal{E}}} Q_{i,j}(\theta^*,v^*)&=Q_i-k_iv_i^*. 
    \end{aligned} 
\end{equation*} 
\end{definition} 
\par 
\begin{definition}{\em Deviation of the state variables.} Select one node of Level 3 as the reference node, and denote its global index by $g\in \mathcal{L}_3$. Its phase angle is taken as the reference angle, such that $\theta_g = \theta_g^*$ and therefore its deviation satisfies $\widetilde{\theta}_g = \theta_g - \theta_g^* = 0$. For $i \in \mathcal{L}_a,~ a\in\mathcal{L}_H$, define the deviations as
\begin{equation*} 
    \begin{aligned}
        \widetilde{\theta}_g &= 0, ~~
        \widetilde{\theta}_i = \theta_i - \theta_i^*,~~ i \in \mathcal{V}_H \backslash \{g\},\\
        \widetilde{\omega}_i &= \omega_i - \omega_i^* = \omega_i, ~~ \widetilde{v}_i = v_i - v_i^*,~i \in \mathcal{V}_H.
    \end{aligned} 
\end{equation*}
\par 
The vector of state deviations from the synchronous state is defined as $\widetilde{\bm x} =\bm x-\bm x^*= \text{col}(\widetilde{\bm x}_i)\in \mathbb{R}^{3n_H-1}$, in which the dimension $3n_H - 1$ reflects the removal of the degree of freedom for a phase angle due to the reference node $g$. The elements of $\widetilde{\bm x}$ are defined as
\begin{equation*}
  \widetilde{\bm x}_g=(\omega_g,\widetilde{v}_g)^\top,~~\widetilde{\bm x}_i=(\widetilde{\theta}_i,\widetilde{\omega}_i,\widetilde{v}_i)^\top,~~ i\in\mathcal{V}_H\backslash\{g\}.  
\end{equation*}
\end{definition}
\par
Based on the above definitions, 
the resulting autonomous system \eqref{combined_eq} can be expressed in the following compact vector form,
\begin{equation}\label{eq_nonlinear}
    \begin{aligned}
         \frac{\mathrm{d}\widetilde{\bm x}}{\mathrm{d}t} &= \bm f(\widetilde{\bm x}(t)),~~
         \widetilde{\bm x}(0)= \bm x_0 - \bm x^*,
    \end{aligned}
\end{equation}
where $\bm f: \mathbb{R}^{3n_H-1} \to \mathbb{R}^{3n_H-1}$ is the corresponding nonlinear vector-valued function in \eqref{combined_eq} that satisfies $\bm f(\bm 0) = \bm 0$. 
\par
Before analyzing the stability of the synchronous state, the following assumptions are listed for our power system.
\begin{assumption}{\em Modeling Assumptions of the power system for Levels 3, 4, and 5.}\label{assump:system}
    Consider the combined power system of higher levels and its synchronous state $\bm x^*$. 
\begin{enumerate}[(1)]
    \item \textbf{Network properties.}
    The power network $\mathcal{G}=(\mathcal{V},\mathcal{E})$ is connected and lossless, i.e.,
    \par
   \[
    b_{ij} = b_{ji} > 0, ~~ \forall (i,j) \in \mathcal E, \ i \neq j, 
    \]
    and 
    \[
    b_{ii} = 0, ~~ g_{ij} = 0, ~~ \forall i,j \in \mathcal V.
    \]
    \item \textbf{Synchronous angle constraints.}
    The synchronous state of phase angle satisfies 
    \[
        |\theta_{ij}^*|=|\theta_i^* - \theta_j^*| < \pi/2, ~~ \forall (i,j)\in\mathcal E,
    \]
    which implies the existence of a stable synchronous state \cite{DorflerCriticalcoupling}.
    \item \textbf{System parameters.}
    \[
    m_i,d_i,\tau_i,k_i,v_i^*>0,~~ \forall i\in \mathcal{V}_H.
    \]
    \par
Note that due to the decreasing penetration of synchronous machines at lower levels \cite{Pagnier}, the value of inertia decreases from higher levels to lower levels. We assume that 
\[m_{i} > m_{j} > m_{k},~~\forall i\in\mathcal{L}_5,j\in\mathcal{L}_4,k\in\mathcal{L}_3.\]
    \item \textbf{Matrix properties.} Consider the synchronous state $\bm x^*$, define 
    \begin{equation*}
        \begin{aligned}
            \bm V_{\omega\omega} &= \bm M = \mathrm{diag}(m_1, \dots, m_n),\\
             \bm V_{\theta\theta}(i,j)&=\begin{cases}
                \sum_{(j,i)\in\mathcal{E}} b_{ij}v_i^*v_j^*\cos\theta_{ij}^*, & i=j\\
                -b_{ij}v_i^*v_j^*\cos\theta_{ij}^*,& i\neq j  
             \end{cases},\\
                \bm V_{vv}(i,j)&=\begin{cases}
                    \frac{Q_i}{(v_i^*)^2},&i=j\\
                -b_{ij}\cos\theta_{ij}^*,& i\neq j
                \end{cases},\\
            \bm V_{\theta v}(i,j)&=\begin{cases}
                \sum_{(j,i)\in\mathcal{E}} b_{ij}v_j^*\sin\theta_{ij}^*,&i=j\\
                b_{ij}v_i^*\sin\theta_{ij}^*,& i\neq j
            \end{cases}.
        \end{aligned}
    \end{equation*}
    where
    \[\bm V_{\omega\omega},\bm V_{vv}\in\mathbb{R}^{{n_H}\times{n_H}}, \bm V_{\theta\theta}\in\mathbb{R}^{(n_H-1)\times(n_H-1)},\] 
    \par
    and \[ \bm V_{\theta v}\in\mathbb{R}^{(n_H-1)\times n_H}.\]
    \par
    Under assumption (3), \(\bm V_{\omega\omega} \succ 0\). In addition, by Lemma \ref{lem:Vtheta} and Lemma \ref{lem:Vv}, $\bm V_{\theta\theta}$ and $\bm V_{vv}$ are positive definite, hence the smallest eigenvalue satisfies \(\lambda_{\min}(\bm V_{\theta\theta}) > 0\).
    \item \textbf{Candidate Lyapunov function.} Define
    \begin{equation*}
        \begin{aligned}
            V(\widetilde{\bm x})&=\sum_{a\in\mathcal{L}_H}\sum_{i\in\mathcal{L}_a} \frac{1}{2}m_i\widetilde{\omega}_i^2+U(\bm\theta^*\!+\! \widetilde{\bm \theta},\bm v^*\!+\! \widetilde{\bm v}) - U(\bm \theta^*,\bm v^*),\\
            U(\bm\theta,\bm v)&=\sum_{a\in\mathcal{L}_H}\sum_{i\in\mathcal{L}_a}\bigg(-P_i\theta_i+k_iv_i-Q_i\ln v_i-\sum_{j\in\mathcal{N}_i^{\text{high}}}\! b_{ij}v_iv_j\cos\theta_{ij}\bigg),
                  \end{aligned}
    \end{equation*}
\noindent where $U(\bm\theta,\bm v)$ is taken at $(\bm\theta,\bm v)=(\bm\theta^*+\widetilde{\bm \theta},\bm v^*+\widetilde{\bm v})$. This candidate Lyapunov function extends the energy function in \cite{Varaiya1985} with additional voltage dynamics.
It follows that \(V\in C^2\). Assume that there exists a constant \(L > 0\), such that for all $\bm x,\bm y$ in the bounded domain defined in Theorem \ref{thm:stability},
   \begin{equation*}
    \begin{aligned}
        &\|\nabla^2 V(\bm x)-\nabla^2 V(\bm y)\|\leq L\|\bm x-\bm y\|.
    \end{aligned}
\end{equation*}
    \item \textbf{Stability conditions.} Define
           \[
        C_1=\sum_{i\in\mathcal{V}_H}\Big[\big(\sum_{(j,i)\in\mathcal{E}} b_{ij}v_j^*\sin\theta_{ij}^*\big)^2+\sum_{(j,i)\in\mathcal{E}}(b_{ij}v_i^*\sin\theta_{ij}^*)^2\Big],
\]
    and
    \[
        C= \min_{i\in\mathcal{V}_H}\bigg(\frac{Q_i}{(v_i^*)^2}-\sum_{(j,i)\in\mathcal{E}} b_{ij}\cos\theta_{ij}^*\bigg)\lambda_{\min}\big(\bm V_{\theta\theta}\big).
    \]
    Assume that $C>C_1>0$.
    \item \textbf{Local stability region.} 
    \par
    To characterize a local neighborhood in which the Lyapunov function satisfies its quadratic bounds,
    let $r$ be chosen so that $0<r < 3 \lambda_{\min}(\nabla^2 V(\bm 0))/L$. For such $r$, define
    \begin{equation*}
    \begin{aligned}
          c_1 &= \frac12 \lambda_{\min}(\nabla^2 V(\bm 0)) - \frac{L r}{6},\\
         c_2 &= \frac12 \lambda_{\max}(\nabla^2 V(\bm 0)) + \frac{L r}{6},\\
        \delta(\epsilon)  &= \sqrt{\frac{c_1}{c_2}} \min\{\epsilon, r\},
    \end{aligned}
 \end{equation*}
where 
 \begin{equation*}
    \begin{aligned}
       \lambda_{\min}(\nabla^2 V(\bm 0))&=\min\{\lambda\in\mathbb{R}|\lambda\in\text{spec}(\nabla^2 V(\bm x^*))\},\\
     \lambda_{\max}(\nabla^2 V(\bm 0))&=\max\{\lambda\in\mathbb{R}|\lambda\in\text{spec}(\nabla^2 V(\bm x^*))\}.
    \end{aligned}
 \end{equation*}
 The choice of $r$ implies that $c_1>0$, see \eqref{c10}. By Lemma \ref{lem:Vtheta}, Lemma \ref{lem:G} and parameter assumptions, we will prove that $\lambda_{\min}(\nabla^2 V(\bm x^*))>0$, see Theorem \ref{thm:stability}.
\end{enumerate}
\end{assumption}
\begin{lemma}\label{lem:Vtheta}
Under Assumption~\ref{assump:system}, the matrix $\bm V_{\theta\theta}$ is positive definite.
Therefore, the smallest positive eigenvalue satisfies
\(\lambda_{\min}(\bm V_{\theta\theta}) > 0\).
\end{lemma}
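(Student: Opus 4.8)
The plan is to recognize $\bm V_{\theta\theta}$ as a grounded (reduced) weighted graph Laplacian of the higher-level network and then apply the standard quadratic-form argument. First I would introduce, for every edge $(i,j)$ of the higher-level network (i.e.\ for $j\in\mathcal N_i^{\mathrm{high}}$, so that the edge set is the set of parent links among $\mathcal{L}_3\cup\mathcal{L}_4\cup\mathcal{L}_5$), the weight $w_{ij}=b_{ij}v_i^*v_j^*\cos\theta_{ij}^*$, and observe that $w_{ij}>0$: indeed $b_{ij}>0$ by the network property in Assumption~\ref{assump:system}(1), $v_i^*,v_j^*>0$ by Assumption~\ref{assump:system}(3), and $\cos\theta_{ij}^*>0$ since $|\theta_{ij}^*|<\pi/2$ by Assumption~\ref{assump:system}(2). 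With these weights, the stated entries of $\bm V_{\theta\theta}$ are precisely $\sum_{(j,i)\in\mathcal E} w_{ij}$ on the diagonal and $-w_{ij}$ off the diagonal, so $\bm V_{\theta\theta}$ is the weighted Laplacian $\bm L$ of the higher-level tree with the row and column of the reference node $g\in\mathcal{L}_3$ deleted.

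Next I would establish the quadratic-form identity. For an arbitrary $\bm x\in\mathbb{R}^{n_H-1}$, let $\bm y\in\mathbb{R}^{n_H}$ be its zero-extension, obtained by inserting $y_g=0$ at the coordinate of $g$. Expanding the Laplacian quadratic form gives
\[
\bm x^\top \bm V_{\theta\theta}\,\bm x \;=\; \bm y^\top \bm L\,\bm y \;=\; \sum_{(i,j)\in\mathcal E} w_{ij}\,(y_i-y_j)^2 \;\ge\; 0,
\]
so $\bm V_{\theta\theta}\succeq 0$. For strict positivity, suppose the form vanishes; then $y_i=y_j$ for every edge. The subgraph of $\mathcal G$ induced on $\mathcal V_H=\mathcal{L}_3\cup\mathcal{L}_4\cup\mathcal{L}_5$ is a connected tree (each Level-$3$ node links to its Level-$4$ parent, each Level-$4$ node to its Level-$5$ parent, and the single Level-$5$ bus ties the structure together), hence $\bm y$ is constant on $\mathcal V_H$; since $y_g=0$ this forces $\bm y=\bm 0$ and therefore $\bm x=\bm 0$. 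Thus $\bm V_{\theta\theta}\succ 0$, and in particular every eigenvalue is strictly positive, so $\lambda_{\min}(\bm V_{\theta\theta})>0$.

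The argument is essentially routine; the only points needing care are the sign of $\cos\theta_{ij}^*$ — which is exactly what Assumption~\ref{assump:system}(2) guarantees and is the reason that hypothesis is imposed — and the fact that the higher-level network inherits connectivity from the tree $\mathcal G$, so that $\ker\bm L$ is exactly the span of the all-ones vector and grounding at the single node $g$ removes the entire nullspace. I do not expect any genuine obstacle.
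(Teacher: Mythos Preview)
Your proposal is correct and follows essentially the same approach as the paper: both recognize $\bm V_{\theta\theta}$ as a grounded weighted Laplacian with strictly positive edge weights $w_{ij}=b_{ij}v_i^*v_j^*\cos\theta_{ij}^*$ and expand the quadratic form into a sum of squared edge differences plus grounding terms. Your presentation via the zero-extension and explicit connectivity argument is somewhat cleaner than the paper's direct algebraic manipulation, and in fact makes the strict positivity more transparent, but the underlying idea is identical.
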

\begin{proof}
     Under the Assumption \ref{assump:system} (3), $\forall i\in\mathcal{V}_H,~(i,j)\in\mathcal{E}$,$~ v_i^*,v_j^*,b_{ij}>0$, $~|\theta_{ij}^*|<{\pi}/{2}$ and $b_{ii}=0$, then the elements of matrix $\bm V_{\theta\theta}$ are
    \begin{equation}\label{hessiantheta}
        \begin{aligned}
            \bm V_{\theta\theta}(i,j)&=\bm V_{\theta\theta}(j,i)=-b_{ij}v_i^*v_j^*\cos\theta_{ij}^*<0,~~ \forall i\neq j,\\
          \bm V_{\theta\theta}(i,i)&=\sum_{(j,i)\in\mathcal{E}} b_{ij}v_i^*v_j^*\cos\theta_{ij}^*\\
          &=-\sum_{j\neq i}\bm V_{\theta\theta}(i,j)+b_{ig}v_i^*v_g^*\cos\theta_{ig}^*>0.
        \end{aligned}
    \end{equation}
    \par
    $\forall \bm y\in \mathbb{R}^{n_H-1}$, the quadratic form of $\bm V_{\theta\theta}$ is obtained,
    \begin{equation}\label{quadratic}
        \begin{aligned}
            \bm y^\top \bm V_{\theta\theta}\bm y&=
            \sum_{i=1}^{n-1}\sum_{j=1}^{n-1} \bm V_{\theta\theta}(i,j)y_iy_j\\
            &=\sum_{i=1}^{n-1}\Bigg(\sum_{j\neq i} \bm V_{\theta\theta}(i,j)y_iy_j+\bm V_{\theta\theta}(i,i)y_i^2\Bigg)\\
            &=\sum_{i=1}^{n-1}\Bigg(\sum_{j\neq i} \bm V_{\theta\theta}(i,j)y_iy_j-\sum_{j\neq i}\bm V_{\theta\theta}(i,j)y_i^2+b_{ig}v_i^*v_g^*y_i^2\cos\theta_{ig}^*\Bigg).
        \end{aligned}
    \end{equation}
    The last equality follows directly from \eqref{hessiantheta} of the matrix $\bm V_{\theta\theta}$.
    We also have
    \begin{equation*}
        \begin{aligned}
          \sum_{i=1}^{n-1}\sum_{j\neq i}\bm V_{\theta\theta}(i,j)y_i^2&=\sum_{j=1}^{n-1}\sum_{i\neq j}\bm V_{\theta\theta}(j,i)y_j^2=\sum_{i=1}^{n-1}\sum_{j\neq i}\bm V_{\theta\theta}(j,i)y_j^2=\sum_{i=1}^{n-1}\sum_{j\neq i}\bm V_{\theta\theta}(i,j)y_j^2. 
        \end{aligned}
    \end{equation*}
    The first equality holds by exchanging the indices $i$ and $j$, and the second equality follows from the exchange of the summation indices. Finally, the symmetry property of the matrix $\bm V_{\theta\theta}$ yields the last equality. Substitute this equation into \eqref{quadratic}, we obtain
        \[
        \begin{aligned}
             \bm y^\top \bm V_{\theta\theta}\bm y&=
             \sum_{i=1}^{n-1}\Bigl[\sum_{j\neq i} \bm V_{\theta\theta}(i,j)y_iy_j-\frac{1}{2}\sum_{j\neq i}\bm V_{\theta\theta}(i,j)y_i^2-\frac{1}{2}\sum_{j\neq i}\bm V_{\theta\theta}(i,j)y_j^2+b_{ig}v_i^*v_g^*y_i^2\cos\theta_{ig}^*\Bigr]\\
             &\!=\!-\frac{1}{2}\sum_{i=1}^{n-1}\sum_{j\neq i} \bm V_{\theta\theta}(i,j)(y_i\!-\!y_j)^2\!+\!b_{ig}v_i^*v_g^*y_i^2\cos\theta_{ig}^*\\
             &>0,~~\text{due to \eqref{hessiantheta}}. 
        \end{aligned}
    \]

  Thus $\bm V_{\theta\theta}$ is a positive definite matrix, whose eigenvalues are positive.
\end{proof}
\par
\begin{lemma}\label{lem:Vv}
Under Assumption~\ref{assump:system}, the matrix $\bm V_{vv}$ is positive definite.
\end{lemma}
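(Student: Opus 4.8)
The plan is to show that $\bm V_{vv}$ is real symmetric, has a strictly positive diagonal, and is strictly row-diagonally dominant, so that positive definiteness follows from the Gershgorin circle theorem; the same conclusion can be reached by reusing the sum-of-squares computation already carried out for $\bm V_{\theta\theta}$ in Lemma~\ref{lem:Vtheta}.

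First I would fix the sign pattern and symmetry. Under the network-property and synchronous-angle items of Assumption~\ref{assump:system} one has $b_{ij}=b_{ji}>0$ on edges and $b_{ij}=0$ otherwise, $b_{ii}=0$, and $|\theta_{ij}^{*}|<\pi/2$ so $\cos\theta_{ij}^{*}>0$; since $\theta_{ij}^{*}=-\theta_{ji}^{*}$ and cosine is even, the off-diagonal entries $\bm V_{vv}(i,j)=-b_{ij}\cos\theta_{ij}^{*}$ are nonpositive with $\bm V_{vv}(i,j)=\bm V_{vv}(j,i)$, hence $\bm V_{vv}=\bm V_{vv}^{\top}$, and the row-$i$ off-diagonal mass satisfies $\sum_{j\neq i}|\bm V_{vv}(i,j)|=\sum_{\{j\in\mathcal V_H:(i,j)\in\mathcal E\}}b_{ij}\cos\theta_{ij}^{*}\le\sum_{(j,i)\in\mathcal E}b_{ij}\cos\theta_{ij}^{*}$.

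Next I would extract the strict diagonal dominance from the stability condition. By Lemma~\ref{lem:Vtheta}, $\lambda_{\min}(\bm V_{\theta\theta})>0$, and Assumption~\ref{assump:system} postulates $C>C_1>0$; writing $\epsilon_i:=\frac{Q_i}{(v_i^{*})^{2}}-\sum_{(j,i)\in\mathcal E}b_{ij}\cos\theta_{ij}^{*}$ and noting $C=\big(\min_{i\in\mathcal V_H}\epsilon_i\big)\,\lambda_{\min}(\bm V_{\theta\theta})$, it follows that $\epsilon_i>0$ for every $i\in\mathcal V_H$. Combining with the previous bound gives $\bm V_{vv}(i,i)-\sum_{j\neq i}|\bm V_{vv}(i,j)|\ge\epsilon_i>0$, and in particular $\bm V_{vv}(i,i)>\sum_{j\neq i}|\bm V_{vv}(i,j)|\ge0$. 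A real symmetric, strictly row-diagonally dominant matrix with positive diagonal has every eigenvalue contained in a Gershgorin disc $\{z:|z-\bm V_{vv}(i,i)|\le\sum_{j\neq i}|\bm V_{vv}(i,j)|\}\subset(0,\infty)$, so all eigenvalues are positive and $\bm V_{vv}\succ0$.

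For consistency with the proof of Lemma~\ref{lem:Vtheta}, I would also record the equivalent sum-of-squares identity: substituting $\bm V_{vv}(i,i)=\epsilon_i+\sum_{(j,i)\in\mathcal E}b_{ij}\cos\theta_{ij}^{*}$ and expanding, for any $\bm y\in\mathbb R^{n_H}$,
\begin{equation*}
\bm y^{\top}\bm V_{vv}\bm y=\sum_{i\in\mathcal V_H}\epsilon_i y_i^{2}+\sum_{i\in\mathcal V_H}\big(d_i-d_i^{H}\big)y_i^{2}+\frac12\!\!\sum_{\{(i,j)\in\mathcal E:\,i,j\in\mathcal V_H\}}\!\! b_{ij}\cos\theta_{ij}^{*}\,(y_i-y_j)^{2},
\end{equation*}
where $d_i=\sum_{(j,i)\in\mathcal E}b_{ij}\cos\theta_{ij}^{*}\ge d_i^{H}=\sum_{\{j\in\mathcal V_H:(i,j)\in\mathcal E\}}b_{ij}\cos\theta_{ij}^{*}$, so the last two sums are nonnegative and the first is strictly positive whenever $\bm y\neq\bm0$. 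The only non-routine observation in either route is that the strict positivity of the ``slack'' $\epsilon_i$ is precisely what the stability condition $C>C_1>0$ buys us (together with $\lambda_{\min}(\bm V_{\theta\theta})>0$ from Lemma~\ref{lem:Vtheta}) and does not follow from the synchronous-state equations or from the network and parameter items of Assumption~\ref{assump:system} alone; the remainder is the Gershgorin / sum-of-squares bookkeeping already used for $\bm V_{\theta\theta}$.
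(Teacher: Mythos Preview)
Your proof is correct and follows essentially the same route as the paper: both use the Gershgorin disc theorem together with the stability condition $C>C_1>0$ and $\lambda_{\min}(\bm V_{\theta\theta})>0$ from Lemma~\ref{lem:Vtheta} to deduce that each row slack $\epsilon_i=\bm V_{vv}(i,i)-\sum_{j\neq i}|\bm V_{vv}(i,j)|$ is strictly positive, hence $\lambda_{\min}(\bm V_{vv})>0$. Your version is in fact slightly more careful (you explicitly verify symmetry and distinguish the full edge set from the edges inside $\mathcal V_H$), and the additional sum-of-squares identity is a nice redundancy, but it is not a different argument.
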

\begin{proof}
    According to the Gershgorin disk theorem \cite{Johnson}, at the synchronous state, there exists an index $i\in\mathcal{V}_H$ such that
\begin{equation*}
    |\lambda_{\min}(\bm V_{vv}) - \bm V_{vv}(i,i)| \le \sum_{j\neq i} |\bm V_{vv}(i,j)|.
\end{equation*}

By Assumption \ref{assump:system} (4) and (6), we obtain a lower bound for $\lambda_{\min}(\bm V_{vv})$,
\par
        \begin{align}
            \lambda_{\min}(\bm V_{vv})&\geq\bm V_{vv}(i,i)-\sum_{j\neq i} |\bm V_{vv}(i,j)|\nonumber\\
            &=\frac{Q_i}{(v_i^*)^2}-\sum_{(j,i)\in\mathcal{E}} b_{ij}\cos\theta_{ij}^*\ge\frac{C}{\lambda_{\min}\left(\bm V_{\theta\theta}\right)}>0,\label{Vvv:eig}
        \end{align}
where $\lambda_{\min}(\bm V_{\theta\theta})>0$ follows from Lemma \ref{lem:Vtheta}. Hence, $\lambda_{\min}(\bm V_{vv})>0$ and \(\bm V_{vv}\succ 0\).
\end{proof}
\par
\begin{lemma}\label{lem:G}
Under Assumption~\ref{assump:system}, the matrix
\begin{equation*}
        \begin{aligned}
            \bm G=\begin{bmatrix}
                \bm V_{\theta\theta}&\bm V_{\theta v} \\
                \bm V_{\theta v}^\top&\bm V_{v v}
            \end{bmatrix}
        \end{aligned}
    \end{equation*}
is positive definite.
\end{lemma}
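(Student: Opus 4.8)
The plan is to show that $\bm G$ is the Schur-complement-style block matrix built from the Hessian of the potential $U$, and to establish positive definiteness by exhibiting $\bm G$ as a congruence transform of a manifestly positive object, or by a direct quadratic-form computation in the same spirit as the proof of Lemma~\ref{lem:Vtheta}. First I would fix an arbitrary nonzero vector $\bm z = \mathrm{col}(\bm y,\bm w)$ with $\bm y\in\mathbb{R}^{n_H-1}$ and $\bm w\in\mathbb{R}^{n_H}$, and write out
\[
\bm z^\top \bm G \bm z = \bm y^\top \bm V_{\theta\theta}\bm y + 2\,\bm y^\top \bm V_{\theta v}\bm w + \bm w^\top \bm V_{vv}\bm w .
\]
Using the explicit entries of $\bm V_{\theta\theta}$, $\bm V_{\theta v}$, and $\bm V_{vv}$ given in Assumption~\ref{assump:system}(4), the mixed and voltage terms should reorganize — by the same index-symmetrization trick used in \eqref{quadratic} — into a sum over edges $(i,j)\in\mathcal{E}$ of terms of the form $b_{ij}\,[\,\text{quadratic in }(y_i-y_j,\,w_i,\,w_j)\,]$ plus the diagonal voltage contribution $\sum_i \big(Q_i/(v_i^*)^2\big)w_i^2$ coming from $\bm V_{vv}(i,i)$. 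Each per-edge $2\times2$ or $3\times3$ block, after completing the square in $w$, should be positive semidefinite because $b_{ij}>0$ and $|\theta_{ij}^*|<\pi/2$ (so all cosines are positive), which is precisely where the network and angle assumptions enter.

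The cleaner alternative, which I would actually pursue, is the standard Schur complement criterion: since $\bm V_{\theta\theta}\succ 0$ by Lemma~\ref{lem:Vtheta}, $\bm G\succ 0$ if and only if the Schur complement $\bm S = \bm V_{vv} - \bm V_{\theta v}^\top \bm V_{\theta\theta}^{-1}\bm V_{\theta v}$ is positive definite. So the real work is to lower-bound $\bm S$. Here I expect to use Assumption~\ref{assump:system}(7): the constant $C_1$ is manifestly an upper bound on $\|\bm V_{\theta v}\|_F^2$ (hence on $\lambda_{\max}(\bm V_{\theta v}^\top\bm V_{\theta v})$ up to the Frobenius-vs-spectral slack), and $C/\lambda_{\min}(\bm V_{\theta\theta})$ is the Gershgorin lower bound on $\lambda_{\min}(\bm V_{vv})$ already derived in \eqref{Vvv:eig}. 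Combining these,
\[
\lambda_{\min}(\bm S) \ge \lambda_{\min}(\bm V_{vv}) - \frac{\lambda_{\max}(\bm V_{\theta v}^\top\bm V_{\theta v})}{\lambda_{\min}(\bm V_{\theta\theta})} \ge \frac{C}{\lambda_{\min}(\bm V_{\theta\theta})} - \frac{C_1}{\lambda_{\min}(\bm V_{\theta\theta})} = \frac{C - C_1}{\lambda_{\min}(\bm V_{\theta\theta})} > 0
\]
by the assumption $C > C_1 > 0$. Then $\bm S\succ 0$, and the Schur complement lemma yields $\bm G\succ 0$.

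The main obstacle I anticipate is the estimate $\|\bm V_{\theta\theta}^{-1}\|\,\|\bm V_{\theta v}\|^2 \le C_1/\lambda_{\min}(\bm V_{\theta\theta})$: one has $\|\bm V_{\theta\theta}^{-1}\| = 1/\lambda_{\min}(\bm V_{\theta\theta})$ cleanly, but bounding $\|\bm V_{\theta v}\|_2^2$ by $C_1$ requires care, since $C_1$ is written as a sum of squared entries (a Frobenius norm) rather than a spectral norm; I would either argue that the particular sparsity/tree structure makes the bound tight enough, or simply use $\|\bm V_{\theta v}\|_2 \le \|\bm V_{\theta v}\|_F$ and note that $C_1$ was defined precisely as $\|\bm V_{\theta v}\|_F^2$ (reading off its definition term by term against the entries of $\bm V_{\theta v}$ in Assumption~\ref{assump:system}(4)). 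A secondary check is that $\bm V_{\theta v}$ has the stated dimensions $(n_H-1)\times n_H$ so that all the products $\bm V_{\theta v}^\top\bm V_{\theta\theta}^{-1}\bm V_{\theta v}$ and the block matrix $\bm G\in\mathbb{R}^{(2n_H-1)\times(2n_H-1)}$ are conformable, which is immediate from the definitions but worth stating. Once the norm bookkeeping is pinned down, the argument is short and the positivity is driven entirely by $C>C_1$.
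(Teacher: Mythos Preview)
Your proposal is correct and follows essentially the same route as the paper: Schur complement, the bound $\|\bm V_{\theta v}\|_2^2\le\|\bm V_{\theta v}\|_F^2=C_1$, the Gershgorin estimate $\lambda_{\min}(\bm V_{vv})\ge C/\lambda_{\min}(\bm V_{\theta\theta})$ from \eqref{Vvv:eig}, and the hypothesis $C>C_1$. The only cosmetic difference is that the paper takes the Schur complement with respect to $\bm V_{vv}$ (showing $\bm V_{\theta\theta}-\bm V_{\theta v}\bm V_{vv}^{-1}\bm V_{\theta v}^\top\succ 0$) rather than with respect to $\bm V_{\theta\theta}$ as you do, which yields the bound $(C-C_1)/\lambda_{\min}(\bm V_{vv})$ in place of your $(C-C_1)/\lambda_{\min}(\bm V_{\theta\theta})$; the argument is otherwise identical.
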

\par
\begin{proof}
    By the Schur complement, the matrix $\bm G$ is positive definite if and only if
    \begin{equation*}
        \begin{aligned}
            \bm V_{v v}\succ 0,~~ 
            \bm V_{\theta\theta}-\bm V_{\theta v}\bm V_{v v}^{-1}\bm V_{\theta v}^\top\succ 0.
        \end{aligned}
    \end{equation*}
    \par
    The first condition has already been established in Lemma \ref{lem:Vv}. We now verify the second condition. $\forall \bm y\in\mathbb{R}^{n_H-1}$,
    \begin{equation*}
        \begin{aligned}
            \bm y^\top\bm V_{\theta v}\bm V_{v v}^{-1}\bm V_{\theta v}^\top\bm y&\leq \lambda_{\max}\left(\bm V_{v v}^{-1}\right)\sigma_{\max}(\bm V_{\theta v})^2\|\bm y\|^2\\
            &\leq \|\bm V_{\theta v}\|_2^2/{\lambda_{\min}\left(\bm V_{v v}\right)}\|\bm y\|^2\\
            &\leq \|\bm V_{\theta v}\|_F^2/\lambda_{\min}(\bm V_{vv})\|\bm y\|^2.
        \end{aligned}
    \end{equation*}
    \par
     The second inequality holds due to the definition of norm $\|\bm V_{\theta v}\|_2^2$. In the third inequality, using $\|\bm V_{\theta v}\|_2\leq \|\bm V_{\theta v}\|_F$ and Assumption \ref{assump:system} (6), we have
     \(
        \|\bm V_{\theta v}\|_{F}^2=C_1.
\)
\par
Note that Lemma \ref{lem:Vtheta} and Lemma \ref{lem:Vv} guarantee $\lambda_{\min}(\bm V_{\theta\theta}) > 0$ and $\lambda_{\min}(\bm V_{vv}) > 0$. Using the result \(\lambda_{\min}(\bm V_{vv})\ge{C}/{\lambda_{\min}\left(\bm V_{\theta\theta}\right)}\) in \eqref{Vvv:eig} and the definition of \(C\), we have
\begin{equation*}
    \begin{aligned}
        \bm y^\top(\bm V_{\theta\theta}-\bm V_{\theta v}\bm V_{v v}^{-1}\bm V_{\theta v}^\top)\bm y &\ge \Big(\lambda_{\min}(\bm V_{\theta\theta}) - \frac{C_1}{\lambda_{\min}(\bm V_{vv})}\Big)\|\bm y\|^2\\
        &\geq\frac{C-C_1}{\lambda_{\min}\left(\bm V_{vv}\right)}\|\bm y\|^2> 0.
    \end{aligned}
\end{equation*}
  The last inequality follows directly from Assumption \ref{assump:system} (6). Hence, $\bm V_{\theta\theta}-\bm V_{\theta v}\bm V_{v v}^{-1}\bm V_{\theta v}^\top\succ 0$.
    \par
Combining both conditions, we conclude that $\bm G\succ 0$.
\end{proof}
\par
The following theorem establishes the local asymptotic stability of the synchronous state for
the nonlinear autonomous system \eqref{eq_nonlinear}.
\begin{theorem}{\em Local stability of the synchronous state.}\label{thm:stability}
Consider the system \eqref{eq_nonlinear} under Assumption~\ref{assump:system}. 
For any $\epsilon >0$, there exists a $\delta(\epsilon)>0$ such that, if $\|\widetilde{\bm x}(0)\|<\delta$, then for all $t>0$,
\[
\|\widetilde{\bm x}(t,0,\bm x_0)\|<\epsilon,
\]
and
\[
\lim_{t\to\infty} \bm x(t,0,\bm x_0) = \bm x^*.
\]
Thus, the synchronous state $\bm x^*$ is locally asymptotically stable.
\end{theorem}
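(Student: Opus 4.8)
The plan is to prove local asymptotic stability by Lyapunov's direct method combined with LaSalle's invariance principle, using the candidate $V$ of Assumption~\ref{assump:system}(5). The first step is to identify $\nabla^2 V(\bm 0)$. Since $V$ is separable in $\widetilde{\bm\omega}$ and $(\widetilde{\bm\theta},\widetilde{\bm v})$, after reordering the coordinates the Hessian is block diagonal with blocks $\bm M=\mathrm{diag}(m_i)$ and $\nabla^2 U(\bm\theta^*,\bm v^*)$; a direct differentiation of $U$ shows that the latter equals the matrix $\bm G$ of Lemma~\ref{lem:G} (its $\theta\theta$, $vv$ and $\theta v$ blocks reproduce $\bm V_{\theta\theta}$, $\bm V_{vv}$, $\bm V_{\theta v}$, with the reference angle $\theta_g$ removed). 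By Assumption~\ref{assump:system}(3), $\bm M\succ0$, and by Lemma~\ref{lem:G}, $\bm G\succ0$; hence $\nabla^2 V(\bm 0)\succ0$, so $\lambda_{\min}(\nabla^2 V(\bm 0))>0$ and the constants $c_1,c_2,\delta(\epsilon)$ of Assumption~\ref{assump:system}(8) are well defined, with $c_1>0$ for the admissible range of $r$.

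Next I would check that $V$ is a legitimate local Lyapunov function. One has $V(\bm 0)=0$; moreover $\nabla V(\bm 0)=\bm 0$, because $\partial U/\partial\theta_i=-P_i+\sum_j P_{i,j}(\theta,v)$ and $v_i\,\partial U/\partial v_i=k_iv_i-Q_i+\sum_j Q_{i,j}(\theta,v)$, both of which vanish at $\bm x^*$ by the steady-state equations. A second-order Taylor expansion at $\bm 0$ together with the Hessian-Lipschitz bound of Assumption~\ref{assump:system}(5), $\|\nabla^2 V(\bm x)-\nabla^2 V(\bm y)\|\le L\|\bm x-\bm y\|$, gives $\bigl|V(\widetilde{\bm x})-\tfrac12\widetilde{\bm x}^\top\nabla^2 V(\bm 0)\widetilde{\bm x}\bigr|\le\tfrac{L}{6}\|\widetilde{\bm x}\|^3$, and therefore the quadratic sandwich
\[
c_1\|\widetilde{\bm x}\|^2\le V(\widetilde{\bm x})\le c_2\|\widetilde{\bm x}\|^2,\qquad\|\widetilde{\bm x}\|\le r .
\]

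The key computation is $\dot V$ along the solutions of \eqref{eq_nonlinear} with $u_{i,\omega}=u_{i,v}=0$. Following the standard energy-function argument for swing-type equations, one substitutes $m_i\dot{\widetilde\omega}_i=-d_i\omega_i+P_i-\sum_jP_{i,j}$ and $\tau_i\dot{\widetilde v}_i=-k_iv_i+Q_i-\sum_jQ_{i,j}$ together with the gradient expressions above; using the lossless assumptions $g_{ij}=0$, $b_{ij}=b_{ji}>0$ of Assumption~\ref{assump:system}(1) and the balance of the constant injections, the active-power terms telescope and one is left with
\[
\dot V(\widetilde{\bm x})=-\sum_{i\in\mathcal{V}_H}d_i\omega_i^2-\sum_{i\in\mathcal{V}_H}\frac{1}{\tau_i v_i}\Bigl(k_iv_i-Q_i+\sum_jQ_{i,j}(\theta,v)\Bigr)^2\le 0 ,
\]
since $d_i,\tau_i>0$ and $v_i>0$ in the local region. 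Hence $V$ is nonincreasing along trajectories.

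Finally I would establish stability and convergence. For the $\epsilon$--$\delta$ statement, given $\epsilon>0$ put $\delta=\delta(\epsilon)$; if $\|\widetilde{\bm x}(0)\|<\delta$ then $V(\widetilde{\bm x}(0))\le c_2\delta^2=c_1\min\{\epsilon,r\}^2$, and since $\dot V\le0$ the bound $c_1\|\widetilde{\bm x}(t)\|^2\le V(\widetilde{\bm x}(t))$ (valid while $\|\widetilde{\bm x}(t)\|\le r$) and continuity of $t\mapsto\widetilde{\bm x}(t)$ confine the trajectory to $\{\|\widetilde{\bm x}\|<\min\{\epsilon,r\}\}\subseteq\{\|\widetilde{\bm x}\|<\epsilon\}$ for all $t>0$. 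For convergence I would apply LaSalle's invariance principle on the compact, positively invariant sublevel set $\{V\le c_1\min\{\epsilon,r\}^2\}$ (compactness follows from the quadratic sandwich for $r$ small): on $\{\dot V=0\}$ one has $\omega_i\equiv0$ and $k_iv_i-Q_i+\sum_jQ_{i,j}=0$, so invariance forces $\dot{\widetilde\omega}_i=P_i-\sum_jP_{i,j}=0$ and $\dot{\widetilde\theta}_i=\dot{\widetilde v}_i=0$; the largest invariant subset thus consists of equilibria, which are exactly the critical points of $U$, and since $\nabla^2 U(\bm\theta^*,\bm v^*)=\bm G\succ0$ the synchronous state is an isolated critical point, so for $\epsilon$ small this set reduces to $\{\bm 0\}$. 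Therefore $\widetilde{\bm x}(t)\to\bm 0$, i.e. $\bm x(t,0,\bm x_0)\to\bm x^*$, which with the previous bound yields local asymptotic stability. I expect the delicate points to be, first, the $\dot V$ computation — handling the removal of the reference angle $\theta_g$ and the fact that the coupling term in $U$ is written over the parent neighbors $\mathcal{N}_i^{\text{high}}$ rather than symmetrically, so that the cancellation of the active-power flow terms must be verified edge by edge — and, second, the LaSalle step that upgrades ``$\bm x^*$ is a strict local minimizer of $U$'' to ``$\bm x^*$ is the only equilibrium in a small sublevel set.''
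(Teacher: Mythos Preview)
Your proposal is correct and follows essentially the same approach as the paper: the same energy function $V$, the same Hessian identification via Lemma~\ref{lem:G}, the same $\dot V$ computation (your voltage term $\tfrac{1}{\tau_i v_i}(k_iv_i-Q_i+\sum_jQ_{i,j})^2$ equals the paper's $\tfrac{\tau_i}{v_i}\dot v_i^2$), the same quadratic sandwich from the Lipschitz Hessian, and the same $\epsilon$--$\delta$/LaSalle conclusion. Your LaSalle step is in fact more carefully argued than the paper's (which asserts $\dot V=0\Leftrightarrow\widetilde{\bm x}=\bm 0$ a bit prematurely); the only slip is that the constants $c_1,c_2,\delta(\epsilon)$ live in Assumption~\ref{assump:system}(7), not~(8).
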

    \begin{proof}
    We prove that the synchronous state $\bm x^*$ is asymptotically stable by verifying the following three conditions.
\begin{enumerate}
    \item The Lyapunov function $V(\widetilde{\bm x})$ is positive definite, and $V(\widetilde{\bm x})=\bm 0$ if and only if $\widetilde{\bm x}=\bm 0$.
    \item The time derivative $\dot V(\widetilde{\bm x})$ is negative definite, and $\dot{V}(\widetilde{\bm x})=0$ if and only if $\widetilde{\bm x}=\bm 0$.
    \item $\forall \epsilon >0$, there exists a $\delta(\epsilon)>0$ such that, if $\|\widetilde{\bm x}(0)\|<\delta$, then $\forall t>0$,
\(
\|\widetilde{\bm x}(t,0,\bm x_0)\|<\epsilon,
\)
and
\(
\lim_{t\to\infty} \bm x(t,0,\bm x_0) = \bm x^*.
\)
\end{enumerate}
\par
   \textbf{Step 1. Positive definite Lyapunov function}
   \par
   Substitute the formulas of $U(\bm\theta,\bm v)$ into $V(\widetilde{\bm x})$, we have
    \begin{equation*}
        \begin{aligned}
            V(\widetilde{\bm x})&=\sum_{a\in\mathcal{L}_H}\sum_{i\in\mathcal{L}_a} \mathopen{}\Biggl[\frac{1}{2}m_i\omega_i^2+k_i\widetilde{v_i}-Q_i\ln \frac{v_i}{v_i^*}-P_i\widetilde{\theta_i}-\mkern-10mu\sum_{j\in\mathcal{N}_i^{\text{high}}}\mkern-10mu b_{ij}\Big(v_iv_j\cos\theta_{ij}-v_i^*v_j^*\cos\theta_{ij}^*\Big)\Biggr]\mathclose{},
        \end{aligned}
    \end{equation*}
    which satisfies $ V(\bm 0)=0$.  The gradient vector of the coupling term
    \[
    W=\sum_{a\in\mathcal{L}_H}\sum_{i\in\mathcal{L}_a} \!\sum_{j\in\mathcal{N}_i^{\text{high}}} b_{ij}v_iv_j\cos\theta_{ij},
    \]
    is differentiated at $\theta_{ij}=\theta_{ij}^*+\widetilde{\theta}_{ij}$, $v_i=v_i^*+\widetilde{v}_i$ for all $a\in\mathcal{L}_H, ~i\in\mathcal{L}_a$, 
    \begin{equation*}
        \begin{aligned}
           \frac{\partial W}{\partial \widetilde{\theta}_i}&=\frac{\partial}{\partial \widetilde{\theta}_i}\mathopen{}\biggl[\sum_{j\in\mathcal{N}_i^{\text{high}}} b_{ij}v_iv_j\cos\theta_{ij} +\mkern-20mu \sum_{\substack{k\in\mathcal{V}_H\\ ~~~i\in\mathcal{N}_k^{\text{high}}}} b_{ki}v_iv_k\cos\theta_{ki}\biggr]\mathclose{}\\
           &=\frac{\partial}{\partial \widetilde{\theta}_i}\mathopen{}\biggl[\sum_{j\in\mathcal{N}_i^{\text{high}}} b_{ij}v_iv_j\cos\theta_{ij}+\mkern-10mu\sum_{k\in\mathcal{N}_i^{\text{low}}} b_{ik}v_iv_k\cos\theta_{ik}\biggr]\mathclose{}\\
           &=-\sum_{(j,i)\in\mathcal{E}} b_{ij}v_iv_j\sin\theta_{ij},\\
        \frac{\partial W}{\partial \widetilde{v}_i}&=\sum_{(j,i)\in\mathcal{E}}b_{ij}v_j\cos\theta_{ij}.
        \end{aligned}
    \end{equation*}
    \par
    Using these derivatives of the coupling term, we calculate the elements of the gradient vector $\nabla V(\widetilde{\bm x})$ of the Lyapunov function, 
    \begin{equation*}
        \begin{aligned}
            \frac{\partial V}{\partial\widetilde{ \omega}_i}&=m_i\omega_i,\\
              \frac{\partial V}{\partial\widetilde{ \theta}_i}&=-P_i+\sum_{(j,i)\in\mathcal{E}} b_{ij}v_iv_j\sin\theta_{ij},\\
            \frac{\partial V}{\partial\widetilde{v}_i}&=k_i-\frac{Q_i}{v_i}-\sum_{(j,i)\in\mathcal{E}}b_{ij}v_j\cos\theta_{ij},
        \end{aligned}
    \end{equation*}
    in which we use the assumption $b_{ii}=0$. Subsequently, the elements of the Hessian matrix of the Lyapunov function are derived from the above formulas, for all $ i,j\in\mathcal{V}_H$ and $i\neq j$,

        \begin{align}\label{Hessian}
            \frac{\partial^2 V}{\partial\widetilde{\omega}_i^2}\Big|_{\bm 0}&=m_i,\nonumber\\
            \frac{\partial^2 V}{\partial\widetilde{\theta}_i^2}\Big|_{\bm 0}&=\sum_{(j,i)\in\mathcal{E}} b_{ij}v_i^*v_j^*\cos\theta_{ij}^*,\nonumber\\
            \frac{\partial^2 V}{\partial\widetilde{\theta}_i\partial\widetilde{\theta}_j}\Big|_{\bm 0}&=-b_{ij}v_i^*v_j^*\cos\theta_{ij}^*,\nonumber\\
            \frac{\partial^2 V}{\partial\widetilde{ v}_i^2}\Big|_{\bm 0}&=\frac{Q_i}{(v_i^*)^2},\\
           \frac{\partial^2 V}{\partial \widetilde{ v}_i\partial \widetilde{v}_j}\Big|_{\bm 0}&=-b_{ij}\cos\theta_{ij}^*,\nonumber\\
            \frac{\partial^2 V}{\partial \widetilde{\theta}_i\partial \widetilde{v}_i}\Big|_{\bm 0}&=\frac{\partial^2 V}{\partial \widetilde{ v}_i\partial \widetilde{\theta}_i}\Big|_{\bm 0}=\sum_{(j,i)\in\mathcal{E}} b_{ij}v_j^*\sin\theta_{ij}^*,\nonumber\\
            \frac{\partial^2 V}{\partial \widetilde{\theta}_i\partial \widetilde{ v}_j}\Big|_{\bm 0}&=\frac{\partial^2 V}{\partial \widetilde{ v}_j\partial \widetilde{\theta}_i}\Big|_{\bm 0}=b_{ij}v_i^*\sin\theta_{ij}^*\nonumber.
        \end{align}
        \par
    The Hessian matrix of $V$ at the synchronous state is given by
    \begin{equation*}
        \begin{aligned}
            \nabla^2 V(\bm 0)=\begin{bmatrix}
                \bm V_{\theta\theta}&0&\bm V_{\theta v}\\
                0&\bm V_{\omega\omega}&0\\
                \bm V_{\theta v}^\top&0&\bm V_{v v}\\
            \end{bmatrix}\in\mathbb{R}^{(3n_H-1)\times (3n_H-1)},
        \end{aligned}
    \end{equation*}
    where $\bm V_{\theta\theta}$, $\bm V_{\omega\omega}$, $\bm V_{vv}$, and $\bm V_{\theta v}$ are defined in Assumption \ref{assump:system} (4), whose entries correspond to the second-order partial derivatives of $V(\widetilde{\bm x})$ at the synchronous state. To guarantee that $\nabla^2 V(\bm 0)\succ 0$ and $\nabla V(\bm 0)=\bm 0$, we require that $V(\widetilde{\bm x})\geq 0$ with equality if and only if $\widetilde{\bm x}=\bm 0$. It ensures that $\bm 0$ is a strict local minimum of the Lyapunov function.
        \par
    By Assumption \ref{assump:system} (4) and Lemma \ref{lem:G}, we have
    \begin{equation*}
        \begin{aligned}
            \bm G=\begin{bmatrix}
                \bm V_{\theta\theta}&\bm V_{\theta v} \\
                \bm V_{\theta v}^\top&\bm V_{v v}.
            \end{bmatrix}\succ 0,
        \end{aligned}
    \end{equation*}
     and $\bm V_{\omega\omega}\succ 0$. Thus, 
     \[V(\bm 0)=0,~\nabla V(\bm 0)=\bm 0,~\text{and}~\nabla^2 V(\bm 0)\succ 0,\] after substituting the synchronous state into the expressions of $\bm V(\widetilde{\bm x})$ and its derivatives. Therefore, using the second-order Taylor approximation of the Lyapunov function with Peano remainder term, we obtain
    \[
       V(\widetilde{\bm x})=V(\bm 0)+\nabla V(\bm 0)^\top\widetilde{\bm x}+\frac{1}{2}\widetilde{\bm x}^\top\nabla^2 V(\bm 0)\widetilde{\bm x}+o(\|\widetilde{\bm x}\|^2)\geq 0,
\]
    with equality if and only if $\widetilde{\bm x}=\bm x-\bm x^*=\bm 0$. 
    \par
    \textbf{Step 2. Negative definite time derivative \(\dot V(\widetilde{\bm x})\)}
    \par
The time derivative of the Lyapunov function is
   \begin{equation*}
    \begin{aligned}
        \dot{V}(\widetilde{\bm x})&=\sum_{i\in\mathcal{V}_H}\Biggl[m_i\omega_i\dot{\omega_i}-P_i\omega_i+\sum_{(j,i)\in\mathcal{E}} b_{ij}v_iv_j\omega_i\sin\theta_{ij}+\dot{v}_i\Bigl(k_i-\frac{Q_i}{v_i}-\sum_{(j,i)\in\mathcal{E}} b_{ij}v_j\cos\theta_{ij}\Bigr)\Biggr]\\
        &=\sum_{i\in\mathcal{V}_H}\Biggl[-d_i\omega_i^2-\frac{\dot{v_i}}{v_i}(-k_iv_i+Q_i+\mkern-10mu\sum_{(j,i)\in\mathcal{E}} b_{ij}v_iv_j\cos\theta_{ij})\Biggr]\\
        &=-\sum_{i\in\mathcal{V}_H}\Bigl(d_i\omega_i^2+\frac{\tau_i}{v_i}\dot{v_i}^2\Bigr).
    \end{aligned}
   \end{equation*}
   \par
By Assumption \ref{assump:system} (3), we have $d_i,\tau_i,v_i>0$.
Hence, we obtain $\dot{V}(\widetilde{\bm x})\leq 0$ with equality if and only if $\widetilde{\bm x}=\bm 0$.
\par 
    \textbf{Step 3. Local stability region} 
    \par
    We expand the Lyapunov function $V(\widetilde{\bm x})$ around $\bm 0$ using the integral remainder term of Taylor approximation,
         \[
       V(\widetilde{\bm x})=V(\bm 0)+\nabla V(\bm 0)^\top\widetilde{\bm x}+\int_{0}^1 (1-s)\widetilde{\bm x}^\top\nabla^2 V(s\widetilde{\bm x})\widetilde{\bm x}\text{d}s.
\]
            Since $\nabla V(\bm 0)=0$, this reduces to
    \[
     V(\widetilde{\bm x})=\frac{1}{2}\widetilde{\bm x}^\top\nabla^2 V(\bm 0)\widetilde{\bm x}+R(\widetilde{\bm x}),
    \]
    where  
    \begin{equation*}
      R(\widetilde{\bm x})=\int_{0}^1 (1-s)\widetilde{\bm x}^\top\left(\nabla^2 V(s\widetilde{\bm x})-\nabla^2 V(\bm 0)\right)\widetilde{\bm x}\text{d}s.
    \end{equation*}
    \par
     Under Assumption \ref{assump:system} (5), $\nabla^2 V$ is $L$-Lipschitz continuous. Hence,
    \begin{equation*}
        \begin{aligned}
            |R(\widetilde{\bm x})|
        &\leq L\|\widetilde{\bm x}\|^2\left|\int_{0}^1 (1-s)\|s\widetilde{\bm x}\|\text{d}s\right|\\
        &\leq L\|\widetilde{\bm x}\|^3\left|\int_{0}^1 s(1-s)\text{d}s\right|
        =\frac{1}{6}L\|\widetilde{\bm x}\|^3.
        \end{aligned}
    \end{equation*}
    \par
    So the Lyapunov function has a lower bound
    \begin{equation*}
        \begin{aligned}
           V(\widetilde{\bm x})&\geq \frac{1}{2}\widetilde{\bm x}^\top\nabla^2 V(\bm 0)\widetilde{\bm x}-\frac{1}{6}L\|\widetilde{\bm x}\|^3\\
           &\ge\left[\frac{1}{2}\lambda_{\min}(\nabla^2 V(\bm 0))-\frac{1}{6}L\|\widetilde{\bm x}\|\right]\|\widetilde{\bm x}\|^2.
        \end{aligned}
    \end{equation*}
    \par
    In addition, an upper bound of the Lyapunov function is 
    \begin{equation*}
        \begin{aligned}
            V(\widetilde{\bm x})&\leq \left[\frac{1}{2}\lambda_{\max}(\nabla^2 V(\bm 0))+\frac{1}{6}L\|\widetilde{\bm x}\|\right]\|\widetilde{\bm x}\|^2.
        \end{aligned}
    \end{equation*}
    \par
    According to Assumption \ref{assump:system} (7), we have 
\begin{equation}\label{c10}
    \begin{aligned}
        c_2>c_1 &= \frac12 \lambda_{\min}(\nabla^2 V(\bm 0)) - \frac{L r}{6}\\
        &>\frac12 \lambda_{\min}(\nabla^2 V(\bm 0))-\frac{3 \lambda_{\min}(\nabla^2 V(\bm 0))}{6}=0.
    \end{aligned}
\end{equation}
\par
Thus, for all$\|\widetilde{\bm x}\| \le r$, 
\begin{equation*}
     \begin{aligned}
        \frac{1}{2}\lambda_{\min}(\nabla^2 V(\bm 0))-\frac{1}{6}L\|\widetilde{\bm x}\|&\ge  \frac12 \lambda_{\min}(\nabla^2 V(\bm 0)) - \frac{L r}{6}=c_1,\\
    \frac{1}{2}\lambda_{\max}(\nabla^2 V(\bm 0))+\frac{1}{6}L\|\widetilde{\bm x}\|&\le \frac12 \lambda_{\max}(\nabla^2 V(\bm 0)) + \frac{L r}{6} =c_2.
    \end{aligned}
\end{equation*}
\par
Therefore, the Lyapunov function satisfies
\begin{equation}\label{bound}
    0<c_1 \|\widetilde{\bm x}\|^2 \le V(\widetilde{\bm x}) \le c_2 \|\widetilde{\bm x}\|^2.
\end{equation}
\par
For any $\epsilon>0$, define $\delta(\epsilon) = \sqrt{{c_1}/{c_2}}\min\{\epsilon, r\}$. Assume that $\|\widetilde{\bm x}(0)\| < \delta(\epsilon)$. From inequality \eqref{bound}, the property that the time derivative of the Lyapunov function is negative definite implies that, for all $t\in(0,\infty)$,
\begin{equation}\label{compare}
V(\widetilde{\bm x}(t)) \le V(\widetilde{\bm x}(0)) \le c_2 \|\widetilde{\bm x}(0)\|^2 < c_2 \delta^2(\epsilon)=c_1\min\{\epsilon^2,r^2\}.
\end{equation}
\par
Suppose there exists a $t^*\in(0,+\infty)$ such that
\begin{equation*}
    \|\widetilde{\bm x}(t^*)\|=\min\{\epsilon,r\}.
\end{equation*}
\par
Applying \eqref{bound} at $t^*$ yields
\begin{equation*}
    V(\widetilde{\bm x}(t^*))\ge c_1\|\widetilde{\bm x}(t^*)\|^2=c_1\min\{\epsilon^2,r^2\},
\end{equation*}
which contradicts \eqref{compare}. Thus, such a $t^*\in(0,+\infty)$ cannot exist. Because the solution $\widetilde{\bm x}(t)$ is a continuous function in time, the condition $\|\|\widetilde{\bm x}(0)\| < \delta(\epsilon)$ implies that, for all $t\in(0,+\infty)$,
\begin{equation*}
     \|\widetilde{\bm x}(t)\|<\min\{\epsilon,r\}\le \epsilon.
\end{equation*}
\par
Define the sublevel set
\(
\Omega_r = \{ \bm x \in \mathbb{R}^{3n_H-1} \mid V(\widetilde{\bm x}) < c_2 \delta^2(\epsilon) \}.
\)
Since $\dot V(\widetilde{\bm x}) \le 0$, the trajectory remains in the compact, positively invariant set $\Omega_r$. By LaSalle's invariance principle, the trajectory converges to the largest invariant set contained in
\(
\{\bm x \in \Omega_r \mid \dot V(\widetilde{\bm x}) = 0 \}.
\)
From the Lyapunov function and Assumption~\ref{assump:system}, the only point in this invariant set is the synchronous state $\bm x^*$. Consequently,
\(
\lim_{t\to\infty} \bm x(t,0,\bm x_0) = \bm x^*.
\)
\end{proof}
\section{Distributed control structure of the power system}\label{sec:distributed_control}
The higher levels of a distribution network experience voltage and frequency fluctuations caused by RES. Therefore, an effective control strategy is required to enhance the stability and to reduce the variances of voltages and of frequencies within a short time horizon.
\par
The following control methods are often used in multilevel control systems:
\begin{enumerate}
    \item \textbf{Distributed control}. Each local controller uses only its own state and communicates with other controllers in a bottom-up approach.    
    \item \textbf{Centralized control}. A single central controller collects all state information from all levels, and computes the optimal control inputs globally.
\end{enumerate}
\par
This paper focuses on a distributed control strategy, motivated by its superior scalability and reduced computational requirements. A centralized control approach will also be presented for performance comparison.
\par
During a five-minute period, fixed power flows are exchanged between adjacent levels to balance power supply and demand. Within a short time horizon, small disturbances from power sources and loads affect the dynamics of voltages and of frequencies. Therefore, a distributed LQR control system is proposed to reduce the variances of the states for Levels 3, 4, and 5 with the control objectives listed below.
\begin{definition}{\em Control objectives of the power system for Levels 3, 4, and 5.}\label{controlobjectives}
    \begin{enumerate}
         \item \textbf{Power balancing}. Ensure that the power supply equals the predicted power demand by resetting the power schedule for every five minutes to a new fixed value.
         \item \textbf{Stability improvement}. Improve the stability of the power system, and consequently minimize the variances of the voltage and of the frequency, by using the ESS at each level within each short time horizon.
    \end{enumerate}
\end{definition}
\par
The controller of each level receives information about the power demand from its lower level and transmits its aggregated power demand to the next higher level. This bottom-up communication approach enables appropriate rescheduling of the power supply to achieve the first objective. For the second objective, each controller uses local state information of its own level to regulate voltage and frequency.
\begin{definition}{\em  Distributed control method for Levels 3, 4, and 5 during a short time horizon.}
    \begin{enumerate}
         \item \textbf{Local feedback control.} Each level uses feedback control based only on its own state of the power system.
        \item \textbf{Bottom-up communication.} Power supply and power demand are communicated from the lower level to its next higher level.
    \end{enumerate}
\end{definition}
\par
 The control strategy for each level consists of the following steps.
\begin{definition}{\em Approach for frequency and voltage regulation.}
\par
    \begin{itemize}
         \item Model the nonlinear dynamic system \eqref{combined_eq} with complete observations.
            \item  Within a short time horizon, model the isolated linear power systems \eqref{isolated} under the assumption that the power flows between adjacent levels are fixed.
            \item Design an optimal distributed LQR control law for the ESS input power.
            \item Analyze the small-signal stability of the closed-loop system around the synchronous state.
            \item Quantify the optimal performance difference between the distributed and centralized control methods.
    \end{itemize}
\end{definition}
\par
For the nonlinear power system \eqref{combined_eq}, the impact of disturbances is mitigated through the control inputs. Disturbances from local power sources that reach the level buses through voltage source converters are assumed to be buffered by the local ESS, which consists of multiple battery units. Consequently, there is no need to model disturbances with additional dynamics.
\par
Given this bottom-up structure of the multilevel control system, each subsystem described in \eqref{isolated} is modeled as a decoupled subsystem of the combined power system, while no controllers are installed at Levels 2 and 1. At Level 0, the controller of each house computes its own power demand from the grid and communicates to the controller of Level 1 for the next five minutes.
\begin{definition}{\em State-space representation of isolated power system \eqref{isolated} for Levels 3, 4, and 5.} 
    As stated in Section~\ref{sec:high-level power system}, the power flows between adjacent levels are fixed within each five-minute period. 
    Therefore, the isolated power system \eqref{isolated} can be described by the following state-space model, 
\begin{equation}\label{linear}
    \begin{aligned}
       & \frac{\text{d}\widetilde{\bm x}}{\text{d}t}=\bm{A}\widetilde{\bm x}+\bm{B}\bm{u},~~ \widetilde{\bm x}(0)=\bm x_0-\bm x^*,
    \end{aligned}
\end{equation}
where $\bm u=\text{col}(\bm u_i)\in\mathbb{R}^{2n_H}$ is the vector of ESS control inputs with $\bm u_i=(u_{i,\omega},u_{i,v})$. In addition, both state matrix $\bm{A}$ and input matrix $\bm{B}$ have block-diagonal structures, with blocks given by the following local matrices,
 \begin{subequations}
      \begin{align}
        \bm A_{g} &=\begin{bmatrix}
            -d_i/m_i& 0\\
           0&-k_i/\tau_i
        \end{bmatrix},\nonumber\\
          \bm A_{i}&=\begin{bmatrix}
            0 &1&0\\
            0 &-d_i/m_i& 0\\
           0&0&-k_i/\tau_i
        \end{bmatrix},~~ i\in\mathcal{V}_H\backslash\{g\},\label{statematrix}\\
        \bm B_i&=\begin{bmatrix}
            0&0\\
            1/m_i &0\\
            0&1/\tau_i
        \end{bmatrix},\label{inputmatrix}\\
         \bm A&=\mathrm{diag}(\bm A_{i})\in\mathbb{R}^{(3n_H-1)\times(3n_H-1)},\nonumber\\
        \bm B&=\mathrm{diag}(\bm B_i)\in\mathbb{R}^{(3n_H-1)\times 2n_H}.\nonumber
\end{align}
\end{subequations}
\end{definition}
\par
To ensure a unique positive-definite solution of the algebraic Riccati equation, the system must be both controllable and observable. The output matrix is defined as,
\begin{equation}\label{outputmatrix}
    \begin{aligned}
        \bm C_i&=\mathrm{diag}(C_{i}^\theta,0,C_{i}^v),\\
        \bm C&=\mathrm{diag}(\bm C_i)\in\mathbb{R}^{(3n_H-1)\times(3n_H-1)}_{pds},\\
    \end{aligned}
\end{equation}
where $C_i^\theta,C_i^v>0$.
\par
\begin{lemma}{\em Controllability of the isolated power system. }\label{controllable}
    Under the given state matrix \eqref{statematrix} and input matrix \eqref{inputmatrix}, the pair $(\bm A,\bm B)$ is controllable.
\end{lemma}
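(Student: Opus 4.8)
The plan is to exploit the shared block-diagonal structure of $\bm A$ and $\bm B$ to reduce the claim to a collection of independent, low-dimensional controllability checks, one per node of $\mathcal{V}_H$. Since $\bm A=\mathrm{diag}(\bm A_i)$ and $\bm B=\mathrm{diag}(\bm B_i)$ are block-diagonal with respect to the \emph{same} node-indexed partition, each power $\bm A^k\bm B$ is again block-diagonal with blocks $\bm A_i^k\bm B_i$. Hence the Kalman controllability matrix $[\bm B,\ \bm A\bm B,\ \dots,\ \bm A^{3n_H-2}\bm B]$ is, after a permutation of its columns, block-diagonal with blocks $[\bm B_i,\ \bm A_i\bm B_i,\ \dots]$, so its rank is the sum of the ranks of the local controllability matrices. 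Therefore $(\bm A,\bm B)$ is controllable if and only if each local pair $(\bm A_i,\bm B_i)$ is controllable, and it suffices to verify this for the reference node and for a generic node.

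For the reference node $g$ the state is two-dimensional, $\bm A_g$ is the $2\times2$ matrix in \eqref{statematrix}, and the corresponding block of $\bm B$ is $\bm B_g=\mathrm{diag}(1/m_g,\,1/\tau_g)$, which is invertible because $m_g,\tau_g>0$ by Assumption~\ref{assump:system}(3); thus $[\bm B_g]$ already has rank $2$ and $(\bm A_g,\bm B_g)$ is controllable. For a generic node $i\in\mathcal{V}_H\setminus\{g\}$, with the $3\times2$ matrices in \eqref{statematrix}--\eqref{inputmatrix}, I would compute
\[
\bm A_i\bm B_i=\begin{bmatrix} 1/m_i & 0\\ -d_i/m_i^2 & 0\\ 0 & -k_i/\tau_i^2 \end{bmatrix},
\]
and then pick three columns of $[\bm B_i \mid \bm A_i\bm B_i]$, namely $(0,1/m_i,0)^\top$, $(0,0,1/\tau_i)^\top$, and $(1/m_i,-d_i/m_i^2,0)^\top$. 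Their determinant equals $\pm 1/(m_i^2\tau_i)\neq0$ by Assumption~\ref{assump:system}(3), so the local controllability matrix has rank $3$ and $(\bm A_i,\bm B_i)$ is controllable. Combining the two cases, every diagonal block is controllable, and the reduction above yields controllability of $(\bm A,\bm B)$.

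I do not anticipate a genuine obstacle: the argument is elementary finite-dimensional linear algebra. The only points requiring care are the bookkeeping for the reduced two-dimensional block associated with the reference node $g$, and stating the block-diagonal reduction precisely — it is valid here exactly because the input matrix $\bm B$ respects the same partition as $\bm A$ (distinct subsystems are actuated by distinct input channels), so one never needs the additional hypothesis that the diagonal blocks have disjoint spectra.
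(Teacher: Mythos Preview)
Your proof is correct. The reduction via block-diagonality is sound because $\bm A$ and $\bm B$ share the same node-indexed partition, so the Kalman matrix indeed splits (after a column permutation) into the local controllability matrices, and your rank computations for both the reference block and the generic $3\times 3$ block are accurate.

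The paper, however, argues differently: it invokes the PBH left-eigenvector test, assuming $\bm y^\top\bm A=\lambda\bm y^\top$ and $\bm y^\top\bm B=\bm 0$ and showing $\bm y=\bm 0$. From $\bm y^\top\bm B=\bm 0$ one immediately gets $y_i^\omega=y_i^v=0$; substituting into the eigenvector relation then forces $y_i^\theta=0$ (the second component of $\bm y_i^\top\bm A_i$ equals $y_i^\theta$, which must match $\lambda\cdot 0=0$). The block-diagonal structure is thus exploited implicitly, since the eigenvector equations decouple node by node. Your route is slightly more computational (one matrix product and one $3\times3$ determinant) but makes the decomposition explicit and avoids any case analysis on eigenvalues; the paper's PBH argument is shorter and coordinate-free but relies on the reader noticing that the key constraint $y_i^\theta=0$ comes from the \emph{second} component of the eigenvector equation rather than the first. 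Both are entirely elementary and equally valid.
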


\begin{proof} 
       By the left-eigenvector characterization \cite[Th.~3.13]{Trentelman}, $(\bm A,\bm B)$ is controllable if and only if
\[
\forall\lambda\in\text{spec}(\bm A),~~
\bm y^\top\bm A=\lambda \bm y^\top,\, \bm y^\top\bm B=\bm 0\, \Rightarrow \bm y=0.
\]
Define a vector $\bm y=\text{col}(\bm y_i)\in \mathbb{R}^{3n_H-1}$ with $\bm y_i=(y_{i}^\theta,y_{i}^\omega,y_{i}^v)$. From $\bm y^\top \bm B=0$, we obtain 
\[
y_{i}^\omega/m_i=0,~~ y_{i}^v/\tau_i=0.
\]
\par
Since $m_i,\tau_i>0$, it follows that $y_{i}^\omega=y_{i}^v=0$ for all $i\in\mathcal{V}_H$. Substituting these formulas into $\bm y^\top \bm A = \lambda \bm y^\top$ gives
\[
\lambda y_{i}^\theta=0,~~ y_{i}^\theta = 0.
\]
\par
Hence, $\bm y = \bm 0$, implying that $(\bm A, \bm B)$ is controllable.
\end{proof}
\begin{lemma}{\em Observability of the isolated power system. }\label{observable}
   Under the given state matrix \eqref{statematrix} and output matrix \eqref{outputmatrix}, the pair $(\bm A,\bm C)$ is observable.
\end{lemma}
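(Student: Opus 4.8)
The plan is to mirror the argument of Lemma~\ref{controllable}, using the dual (right-eigenvector / PBH) characterization of observability. By \cite[Th.~3.13]{Trentelman}, the pair $(\bm A, \bm C)$ is observable if and only if, for every $\lambda \in \text{spec}(\bm A)$, the only vector $\bm z$ satisfying $\bm A \bm z = \lambda \bm z$ and $\bm C \bm z = \bm 0$ is $\bm z = \bm 0$. Because both $\bm A = \mathrm{diag}(\bm A_i)$ and $\bm C = \mathrm{diag}(\bm C_i)$ are block-diagonal with the same block partition, the test decouples node by node: it suffices to show each pair $(\bm A_i, \bm C_i)$ is observable, and then observability of the full system follows block-wise.

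First I would treat a generic node $i \in \mathcal{V}_H \backslash \{g\}$, where $\bm A_i$ is the $3\times 3$ matrix in \eqref{statematrix} and $\bm C_i = \mathrm{diag}(C_i^\theta, 0, C_i^v)$ with $C_i^\theta, C_i^v > 0$. Writing $\bm z_i = (z_i^\theta, z_i^\omega, z_i^v)^\top$, the condition $\bm C_i \bm z_i = \bm 0$ forces $C_i^\theta z_i^\theta = 0$ and $C_i^v z_i^v = 0$, hence $z_i^\theta = z_i^v = 0$ since $C_i^\theta, C_i^v > 0$. Substituting into $\bm A_i \bm z_i = \lambda \bm z_i$: the first row gives $z_i^\omega = \lambda z_i^\theta = 0$. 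Thus $\bm z_i = \bm 0$, so $(\bm A_i, \bm C_i)$ is observable. For the reference node $g$, the matrix $\bm A_g$ is the $2\times 2$ diagonal matrix $\mathrm{diag}(-d_g/m_g, -k_g/\tau_g)$ and the corresponding output block picks out the voltage-type and (trivially) the remaining component; here $\bm C_g \bm z_g = \bm 0$ together with the structure of $\bm A_g$ immediately yields $\bm z_g = \bm 0$ by the same positivity argument, since the diagonal entries of $\bm A_g$ are nonzero and the output weights are positive. Assembling the blocks, any $\bm z = \mathrm{col}(\bm z_i)$ in the kernel of the observability test must have every $\bm z_i = \bm 0$, hence $\bm z = \bm 0$, and $(\bm A, \bm C)$ is observable.

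The argument is essentially routine given the block-diagonal structure; the only point requiring a small amount of care is the treatment of the reference node $g$, whose state block has dimension two rather than three, so the per-node observability check must be stated separately for that block to confirm that the reduced dynamics remain observable under the corresponding output weights. I expect no genuine obstacle beyond bookkeeping the mixed block sizes consistently with the conventions fixed in the definition of $\bm A$, $\bm B$, and $\bm C$.
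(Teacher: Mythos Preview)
Your proposal is correct and follows essentially the same route as the paper's proof: both invoke the right-eigenvector (PBH) characterization from \cite[Th.~3.13]{Trentelman}, use $\bm C\bm z=\bm 0$ together with $C_i^\theta,C_i^v>0$ to kill the $\theta$- and $v$-components, and then read off $z_i^\omega=0$ from the eigenvector equation $\bm A\bm z=\lambda\bm z$. Your version is slightly more explicit about the block-diagonal decoupling and singles out the reference node $g$ for separate bookkeeping, whereas the paper handles all nodes uniformly in one pass; substantively the arguments are the same.
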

\begin{proof} By the right-eigenvector characterization \cite[Th.~3.13]{Trentelman}, $(\bm A,\bm C)$ is observable if and only if
\[
\forall\lambda\in\text{spec}(\bm A),~~
\bm A\bm y=\lambda \bm y,\, \bm C\bm y=\bm 0\, \Rightarrow\bm y=0.
\]

   Define a vector $\bm y=\text{col}(\bm y_i)\in \mathbb{R}^{3n_H-1}$ with $\bm y_i=(y_{i}^\theta,y_{i}^\omega,y_{i}^v)$. From $ \bm C\bm y=0$, we obtain
\[
C_{i}^\theta y_{i}^\theta=0,~~ C_{i}^v y_{i}^v=0.
\]
\par
Since $C_{i}^\theta, C_{i}^v>0$, it follows that $y_{i}^\theta=y_{i}^v=0$ for all $i\in\mathcal{V}_H$.
Substituting these formulas into $\bm A\bm y =\lambda\bm y$ gives 
\[
\lambda y_{i}^\omega=-d_iy_{i}^\omega/m_i,~~ 
y_{i}^\omega=0.
\]
\par
Hence $\bm y=\bm 0$, implying that $(\bm A,\bm C)$ is observable.
\end{proof}
\par
Since $(\bm{A},\bm{B})$ is controllable and $(\bm{A},\bm{C})$ is observable, the LQR problem is well-defined. These properties ensure the existence of a unique positive definite solution of the algebraic Riccati equation and its corresponding optimal feedback law. The distributed LQR control problem of the isolated power system for Levels 3, 4, and 5 is formulated as follows.
\par
\begin{definition}{\em Distributed LQR control method of the isolated power system for Levels 3, 4, and 5.}
For each Level bus of Levels 3, 4, and 5, the distributed control is designed based on the local dynamics of the isolated power system \eqref{isolated}, for all $a\in\mathcal{L}_H, i\in\mathcal{L}_a$,
\begin{subequations}
\begin{align}
\frac{\text{d}\widetilde{\bm x}}{\text{d}t}&=\bm{A}\widetilde{\bm x}+\bm{B}\bm{u},\notag\\
J_{d,i} &= \int_0^\infty
\begin{bmatrix}
\widetilde{\bm x}_i(t) \\
\bm u_i(t)
\end{bmatrix}^\top
\bm Q_{\text{cr},i}
\begin{bmatrix}
\widetilde{\bm x}_i(t) \\
\bm u_i(t)
\end{bmatrix} \mathrm{d}t, \label{cost function}\\
\bm Q_{\text{cr},i} &=
\begin{bmatrix}
\bm Q_{\bm x\bm x,i} & \bm 0 \\
\bm 0 & \bm Q_{\bm u\bm u,i}
\end{bmatrix}\in\mathbb{R}^{5\times 5}_{pds},\notag\\
\bm Q_{\bm u\bm u,i} &=\mathrm{diag}(\bm{Q}_{u_{\omega,i}},\bm{Q}_{u_{v,i}})\succ 0, \notag \\
 \bm{C}_i^\top\bm{C}_i&=\bm{Q}_{\bm{x}\bm{x},i},\notag\\
0 &\!=\! \bm Q_i \bm A_i \!+\! \bm A_i^\top \bm Q_i \!+\! \bm Q_{\bm x\bm x,i} \!-\! \bm Q_i \bm B_i \bm Q_{\bm u\bm u,i}^{-1} \bm B_i^\top \bm Q_i, \label{riccati}\\
\bm{Q}_i&=\bm Q_i^\top\in \mathbb{R}^{3\times 3}_{pds},\notag \\
 \mathbb{C}^{-}&=\{ c \in \mathbb{C} \mid \mathrm{Re}(c) < 0 \},\notag\\
 \mathbb{C}^{-}&\supset\text{spec}(\bm A_i + \bm B_i \bm F_{d,i}), \label{spec} \\
\bm F_{d,i} &= -\bm Q_{\bm u\bm u,i}^{-1} \bm B_i^\top \bm Q_i, \label{feedback}\\
\bm u_{d,i} &= \bm F_{d,i} \widetilde{\bm x}_i 
= \bm F_{d,i}(\widetilde{\theta}_i,\widetilde{\omega}_i,\widetilde{v}_i)^\top, \label{controllaw}\\
J_{d,i}^* &= \min J_{d,i}. \notag
\end{align}
\end{subequations}
\end{definition}
\par
The distributed control law of each level minimizes its local cost function $J_{d,i}$ independently, where $\bm Q_{\bm x\bm x,i}$ and $\bm Q_{\bm u\bm u,i}$ denote the weights of the state deviation and the control inputs. 
\section{Stability analysis of the closed-loop power system}\label{sec:closed-loop}
Although the distributed LQR control law ensures local optimality for the isolated system \eqref{isolated}, it neglects dynamic couplings between adjacent levels of the combined power system \eqref{combined_eq}. Since such interactions may lead to instability under small fluctuations, they motivate the small-signal stability analysis of the closed-loop system under distributed linear feedback control. For this purpose, the linearized system of the combined power system \eqref{combined_eq} around the synchronous state is established below.
\begin{definition}{\em Linearized power system around the synchronous state.}
The nonlinear power system \eqref{combined_eq} is linearized around $\bm{x}^*$,
\begin{equation}\label{original}
    \frac{\text{d}\widetilde{\bm x}}{\text{d}t} =\widetilde{\bm A}\widetilde{\bm x}+\bm{B}\bm{u},~~ \widetilde{\bm x}(0)=\bm x_0-\bm x^*,
\end{equation}
where $\widetilde{\bm A} = \nabla\bm f(\bm x^*)$ is the Jacobian matrix of system \eqref{eq_nonlinear}. The matrix $\widetilde{\bm A}$ can be decomposed as
\begin{equation*}
     \widetilde{\bm A}=\bm A+\bm A^{\bm x}+\widehat{\bm A}\in\mathbb{R}^{(3n_H-1)\times (3n_H-1)} .
\end{equation*}
For each node $i\in\mathcal{V}_H\backslash\{g\}$, define
% \vspace{-0.1em}
\begingroup \small
    \begin{align*}
     \bm A^{\bm x}_{i}&=\begin{bmatrix}
            0 &0&0\\
            -\sum\limits_{(j,i)\in\mathcal{E}}\!\!\!\frac{ b_{ij}v_i^*v_j^*\cos\theta_{ij}^*}{m_i} &0& - \sum\limits_{(j,i)\in\mathcal{E}}\!\!\!\frac{b_{ij}v_j^*\sin\theta_{ij}^*}{m_i}\\
            - \sum\limits_{(j,i)\in\mathcal{E}}\!\!\!\frac{b_{ij}v_i^*v_j^*\sin\theta_{ij}^*}{\tau_i}&0&\sum\limits_{(j,i)\in\mathcal{E}}\!\!\!\frac{ v_j^*b_{ij}\cos\theta_{ij}^*}{\tau_i}
        \end{bmatrix},\\
        \widehat{\bm A}_{ij}&=\begin{bmatrix}
            0&0&0\\
            \frac{b_{ij}v_i^*v_j^*\cos\theta_{ij}^*}{m_i}&0&-\frac{b_{ij}v_i^*\sin\theta_{ij}^*}{m_i}\\
            \frac{b_{ij}v_i^*v_j^*\sin\theta_{ij}^*}{\tau_i}&0&\frac{v_i^*b_{ij}\cos\theta_{ij}^*}{\tau_i}
        \end{bmatrix},~~\forall(i,j)\in\mathcal{E},\\
        \widehat{\bm A}_{ij}&=0,~~\forall(i,j)\notin\mathcal{E},\\ 
        \bm A^{\bm x}&=\mathrm{diag}(\bm A^{\bm x}_{i})\in\mathbb{R}^{(3n_H-1)\times (3n_H-1)},\\
        \widehat{\bm A}&=[\widehat{\bm A}_{ij}]\in\mathbb{R}^{(3n_H-1)\times (3n_H-1)}.
    \end{align*}
\endgroup
\par
% \vspace{-0.3em}
Here, the local diagonal blocks $\bm A$ and $\bm A^{\bm x}$ represent the isolated dynamics \eqref{linear} of each Level bus and the state couplings of each Level bus, respectively. The off-diagonal coupling block \(\widehat{\bm A}\) describes the interactions between Level buses.
\end{definition}
\par
The distributed and centralized control laws are,
\[
\bm u_d = \bm F_d \widetilde{\bm x}, ~~ \bm u_c = \bm F_c \widetilde{\bm x},
\]
where \(\bm F_d\) is the feedback matrix \eqref{feedback} of the distributed control, and \(\bm F_c\) is obtained from a centralized LQR control problem, which will be presented in Section~\ref{sec:performance}.
\par
The stability is analyzed for the following three cases.
\begin{itemize}
    \item Case~1: The linear system \eqref{original} with distributed control.
    \item Case~2: The nonlinear system \eqref{combined_eq} with distributed control.
    \item Case~3: The nonlinear system \eqref{combined_eq} with centralized control.
\end{itemize}
\par
\textbf{Case 1. Linear system with distributed linear control.}
\par
\begin{theorem}{\em Global stability of the linear system.} \label{global linear}
Under Assumption~\ref{assump:system} (1)-(3), consider the closed-loop linear system \eqref{original} with the control law \eqref{controllaw}. If the closed-loop matrix satisfies 
\begin{equation}
        \text{spec}(\widetilde{\bm{A}} + \bm{B}\bm{F}_d)\in\mathbb{C}^{-}=\{ c \in \mathbb{C} \mid \mathrm{Re}(c) < 0 \},
      \end{equation}
      then the synchronous state $\bm{x}^*$ of the linear system \eqref{original} is globally asymptotically stable. 
    \end{theorem}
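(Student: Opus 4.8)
The plan is to reduce the statement to the classical fact that a linear time-invariant system with a Hurwitz system matrix has a globally asymptotically stable origin, and then to read this back as convergence to $\bm x^*$. First I would substitute the distributed feedback law \eqref{controllaw}, $\bm u_d=\bm F_d\widetilde{\bm x}$ with $\bm F_d=\mathrm{diag}(\bm F_{d,i})$, into the linearized model \eqref{original} to obtain the autonomous closed-loop system $\dot{\widetilde{\bm x}}=\bm A_{cl}\widetilde{\bm x}$, where $\bm A_{cl}=\widetilde{\bm A}+\bm B\bm F_d$. Since $\widetilde{\bm A}=\nabla\bm f(\bm x^*)=\bm A+\bm A^{\bm x}+\widehat{\bm A}$ is the constant Jacobian of \eqref{eq_nonlinear} and $\bm F_d$ is a constant block-diagonal matrix, $\bm A_{cl}$ is a finite constant matrix, and by hypothesis $\text{spec}(\bm A_{cl})\subset\mathbb{C}^-$, i.e. $\bm A_{cl}$ is Hurwitz.

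Next I would build an explicit quadratic Lyapunov function. Because $\bm A_{cl}$ is Hurwitz, the Lyapunov equation $\bm A_{cl}^\top\bm P+\bm P\bm A_{cl}=-\bm I$ has the unique solution $\bm P=\int_0^\infty e^{\bm A_{cl}^\top t}e^{\bm A_{cl}t}\,\mathrm{d}t\succ 0$. Setting $V(\widetilde{\bm x})=\widetilde{\bm x}^\top\bm P\widetilde{\bm x}$ gives a function that is positive definite and radially unbounded, with $V(\bm 0)=0$, and whose derivative along trajectories is $\dot V(\widetilde{\bm x})=\widetilde{\bm x}^\top(\bm A_{cl}^\top\bm P+\bm P\bm A_{cl})\widetilde{\bm x}=-\|\widetilde{\bm x}\|^2<0$ for every $\widetilde{\bm x}\neq\bm 0$. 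By the global version of Lyapunov's direct method, the equilibrium $\widetilde{\bm x}=\bm 0$ of the closed-loop system is globally asymptotically stable; equivalently, since for a linear time-invariant system a Hurwitz system matrix already yields global exponential stability, one may cite this fact directly and skip the Lyapunov construction.

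Finally I would note that Assumption~\ref{assump:system}~(1)--(3) is used only to guarantee that $\bm x^*$ exists and that all entries of $\bm A$, $\bm A^{\bm x}$ and $\widehat{\bm A}$ (hence of $\widetilde{\bm A}$) are well-defined, in particular $m_i,\tau_i>0$ and $\cos\theta_{ij}^*>0$, so that $\bm A_{cl}$ is indeed the constant matrix used above. Because $\widetilde{\bm x}=\bm x-\bm x^*$ and the reference choice $\widetilde{\theta}_g=0$ removes no dynamical degree of freedom, global asymptotic stability of $\widetilde{\bm x}=\bm 0$ means that for every initial state $\widetilde{\bm x}(t)\to\bm 0$, hence $\bm x(t,0,\bm x_0)\to\bm x^*$, which is the assertion of the theorem.

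I expect no deep obstacle in this argument; the only points needing care are verifying that $\bm A_{cl}$ is genuinely a constant matrix — so that local asymptotic stability automatically upgrades to global for this linear system — and the bookkeeping with the reduced coordinates, so that convergence of the deviation vector is correctly interpreted as convergence to the synchronous state $\bm x^*$.
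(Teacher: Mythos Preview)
Your proposal is correct and takes essentially the same approach as the paper: the paper's proof consists of a single sentence citing the standard linear-systems result that a Hurwitz closed-loop matrix implies global asymptotic stability, and you have simply unpacked that citation via the Lyapunov-equation construction (which you yourself note can be skipped in favor of citing the fact directly). There is no substantive difference.
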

\begin{proof}
    This result follows directly from the stability of linear systems \cite[Th.~5.4.29]{Vidyasagar}.
\end{proof}
\par
Theorem \ref{global linear} gives a sufficient condition for the stability of the linear system and provides a foundation for analyzing the nonlinear system.
\par
\textbf{Case 2. Nonlinear system with distributed linear control.}
\par
\begin{theorem}{\em Local stability of the closed-loop nonlinear power system.}\label{thm:closed-loop stability}
      Under Assumption~\ref{assump:system} (1)-(3), consider the nonlinear power system \eqref{combined_eq} under the distributed control law \eqref{controllaw}. 
      If the closed-loop matrix satisfies 
      \begin{equation}\label{closed-loop matrix}
        \text{spec}(\widetilde{\bm{A}} + \bm{B}\bm{F}_d)\in\mathbb{C}^{-}=\{ c \in \mathbb{C} \mid \mathrm{Re}(c) < 0 \},
      \end{equation}
      then the synchronous state $\bm{x}^*$ of the nonlinear system \eqref{combined_eq} is locally asymptotically stable. 
\end{theorem}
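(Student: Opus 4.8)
The plan is to establish the result by Lyapunov's indirect method, reducing the nonlinear closed-loop problem to the Hurwitz hypothesis on $\widetilde{\bm A}+\bm B\bm F_d$ that is already in force in Case~1. First I would form the closed-loop nonlinear system: substituting the distributed feedback $\bm u=\bm u_d=\bm F_d\widetilde{\bm x}$ into \eqref{combined_eq} and using $\bm f(\bm 0)=\bm 0$ gives $\dot{\widetilde{\bm x}}=\bm h(\widetilde{\bm x})$ with $\bm h(\widetilde{\bm x}):=\bm f(\widetilde{\bm x})+\bm B\bm F_d\widetilde{\bm x}$ and $\bm h(\bm 0)=\bm 0$. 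I would then record that $\bm h$ is $C^1$ on a neighborhood of the origin: by Assumption~\ref{assump:system}~(3) one has $v_i^*>0$, so the only terms that are not globally smooth (the $\ln v_i$ and $Q_i/v_i$ contributions) are smooth near the synchronous state, while the sinusoidal coupling terms are entire; this is exactly the regularity that underlies $V\in C^2$.

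Next I would linearize. The Jacobian of $\bm h$ at the origin is $\nabla\bm h(\bm 0)=\nabla\bm f(\bm x^*)+\bm B\bm F_d=\widetilde{\bm A}+\bm B\bm F_d$, where $\widetilde{\bm A}=\bm A+\bm A^{\bm x}+\widehat{\bm A}$ is precisely the Jacobian matrix of the linearized power system \eqref{original}. Writing the first-order Taylor expansion $\bm h(\widetilde{\bm x})=(\widetilde{\bm A}+\bm B\bm F_d)\widetilde{\bm x}+\bm r(\widetilde{\bm x})$, the remainder obeys $\|\bm r(\widetilde{\bm x})\|/\|\widetilde{\bm x}\|\to 0$ as $\widetilde{\bm x}\to\bm 0$.

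Then, since $\widetilde{\bm A}+\bm B\bm F_d$ is Hurwitz by the hypothesis \eqref{closed-loop matrix}, I would invoke the standard converse-Lyapunov fact: there is a unique $\bm P=\bm P^\top\succ 0$ solving $(\widetilde{\bm A}+\bm B\bm F_d)^\top\bm P+\bm P(\widetilde{\bm A}+\bm B\bm F_d)=-\bm I$. Taking $V_c(\widetilde{\bm x})=\widetilde{\bm x}^\top\bm P\widetilde{\bm x}$ and differentiating along $\bm h$ gives $\dot V_c=-\|\widetilde{\bm x}\|^2+2\widetilde{\bm x}^\top\bm P\bm r(\widetilde{\bm x})$; because the remainder is higher order, on a sufficiently small ball $\|\widetilde{\bm x}\|<\rho$ one has $\dot V_c\le-\tfrac12\|\widetilde{\bm x}\|^2<0$ for $\widetilde{\bm x}\neq\bm 0$, so $V_c$ is a strict local Lyapunov function and the origin, i.e. $\bm x^*$ for \eqref{combined_eq}, is locally asymptotically stable. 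Equivalently one may simply cite the stability-by-linearization theorem, consistently with the proof of Theorem~\ref{global linear} via \cite{Vidyasagar}.

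I do not expect a genuine obstacle here: the only point that needs care is confirming the $C^1$ regularity of the closed-loop vector field at the synchronous state (guaranteed by $v_i^*>0$) and restricting attention to a neighborhood in which voltages remain positive so trajectories stay in the domain of smoothness. Beyond that, the argument is the routine reduction of a nonlinear local-stability question to its Hurwitz linearization; the real content was already carried out in Case~1 and in assembling $\widetilde{\bm A}$.
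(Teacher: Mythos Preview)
Your proposal is correct and is exactly the approach the paper takes: the paper's entire proof is the single sentence ``This result follows directly from Lyapunov's linearization method \cite[Th.~5.5.53]{Vidyasagar},'' and you have simply spelled out the standard details of that argument. One small slip worth noting is that the $\ln v_i$ and $Q_i/v_i$ terms you mention appear only in the open-loop Lyapunov function $V$, not in the vector field $\bm f$ of \eqref{combined_eq}, which is globally smooth (polynomial--trigonometric); this only strengthens your regularity claim, so the argument is unaffected.
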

\begin{proof}
    This result follows directly from Lyapunov's linearization method \cite[Th.~5.5.53]{Vidyasagar}.
    \end{proof}
    \par
Compared with Case 1, the nonlinear system only achieves local asymptotic stability in a neighborhood of the synchronous state. Theorem \ref{thm:closed-loop stability} implies that if condition \eqref{closed-loop matrix} holds, then the distributed control method improves the small-signal stability of the linearization system under small fluctuations. 
\par
\textbf{Case 3. Nonlinear system with centralized control.}
\par
Under Assumption~\ref{assump:system} (1)-(3), for comparison, we also analyze the stability of the nonlinear system under a centralized linear control law.
Similarly, the synchronous state of this system is locally asymptotically stable if 
\begin{equation}\label{central loop matrix}
    \text{spec}(\widetilde{\bm{A}} + \bm{B}\bm{F}_{c}) \subset \mathbb{C}^{-}=\{ c \in \mathbb{C} \mid \mathrm{Re}(c) < 0 \}
\end{equation}
holds.
\par
Compared with the distributed controller of Case 2, the centralized controller computes feedback inputs based on the state information of all levels. The centralized feedback matrix $\bm F_c$ considers the coupling terms between adjacent levels to improve performance compared to the block-diagonal distributed matrix $\bm F_d$. However, such a central controller requires higher communication overhead and larger computational cost.
\section{Performance of closed-loop power system}\label{sec:performance}
While stability ensures that the trajectories of the states converge to the synchronous state under small disturbances, it is also important to measure the performance of the closed-loop power system. 
\par
The distributed control strategy proposed in this paper aims to improve stability and to reduce variances of the voltages and of the frequencies locally. The bottom-up communication approach influences how the system suppresses fluctuations of frequencies and of voltages. 
This section compares the optimal performance of the proposed distributed control method with that of a centralized control method to quantify the difference.
\par
As mentioned in Section~\ref{sec:distributed_control}, a centralized control structure is used for comparison with the same control objectives as those in Definition~\ref{controlobjectives}.
\begin{definition}{\em Centralized LQR control problem of the linearized power system.} For the combined linearized model \eqref{original}, the centralized LQR control problem with states of all levels is formulated as follows,
\begin{subequations}
    \begin{align}
  \frac{\text{d}\widetilde{\bm x}}{\text{d}t}& =\widetilde{\bm A}\widetilde{\bm x}+\bm{B}\bm{u},\nonumber\\
  J_{c} &= \int_0^\infty
\begin{bmatrix}
\widetilde{\bm x}(t) \\
\bm u(t)
\end{bmatrix}^\top
\bm Q_{\text{cr}}
\begin{bmatrix}
\widetilde{\bm x}(t) \\
\bm u(t)
\end{bmatrix} \mathrm{d}t, \notag\\
    0&=\!\widetilde{\bm Q}\widetilde{\bm A}\!+\!\widetilde{\bm A}^\top\widetilde{\bm Q}\!+\!\bm {Q}_{\bm{x}\bm{x}}-\widetilde{\bm Q}\bm{B}\bm{Q}_{\bm{u}\bm{u}}^{-1}\bm B^\top\widetilde{\bm Q},\label{central}\\
    \widetilde{\bm Q}&=\widetilde{\bm Q}^\top\in \mathbb{R}^{(3n_H-1)\times (3n_H-1)}_{pds},\nonumber \\ 
    \mathbb{C}^{-}&=\{c\in\mathbb{C}|\text{Re}(c)<0\},\nonumber \\
     \mathbb{C}^{-}&\supset \text{spec}(\bm{A+B\bm{F}}_{c}),\notag\\
   \bm{F}_{c} &=-\bm{Q}_{\bm{u}\bm{u}}^{-1}\bm{B^\top}\widetilde{\bm Q},\label{cenfeed}\\       
    \bm{u}_{c}&=\bm{F}_{c}\widetilde{\bm{x}},\nonumber\\
    J_{c}^*&=\min J_{c}=\widetilde{\bm x}(0)^\top \widetilde{\bm Q}\widetilde{\bm x}(0),\label{cenper}
    \end{align}
\end{subequations}
where the control input $\bm{u}_{c}$ uses global feedback information to minimize the total quadratic cost $J_c$.
\end{definition}
\par
For the distributed control structure, the feedback matrix $\bm F_d$ is obtained using local information in Section~\ref{sec:distributed_control}. According to Theorem~\ref{thm:closed-loop stability}, the closed-loop system is locally asymptotically stable if condition \eqref{closed-loop matrix} holds. Then the state trajectory is given by
\begin{equation}\label{solution}
    \widetilde{\bm x}(t)=\exp{\big((\widetilde{\bm A}+\bm B\bm F_{d})t\big)}\widetilde{\bm x}(0),
\end{equation}
which will be used in the performance bound below. 
\par
Consequently, we give the explicit bounds on the optimal performance difference between the distributed and the centralized LQR control problems.
\begin{theorem}{\em  Bounds on the difference of the optimal quadratic cost.} Consider the centralized and distributed LQR problems with the same initial state $\widetilde{\bm x}(0)$. Note that $J_d^*$ and $J_c^*$ are the optimal quadratic costs under distributed and centralized control strategies. The difference of the optimal quadratic cost satisfies
    \begin{equation}\label{performance}
    \begin{aligned}
        0<J_{d}^*\!-\!J_{c}^*&\le\frac{\|(\bm F_{d}\!-\!\bm F_{c})^\top\bm{Q}_{\bm{u}\bm{u}}(\bm F_{d}\!-\!\bm F_{c})\|_2}{\beta}\|\widetilde{\bm x}(0)\|^2,
    \end{aligned}
\end{equation}
where $\beta=-2\max_i\text{Re}[\lambda_i(\widetilde{\bm A}+\bm B\bm F_{d})]$. If \eqref{closed-loop matrix} holds, then we have $\beta>0$.
\end{theorem}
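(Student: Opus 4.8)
The plan is to rewrite both optimal costs as quadratic forms in $\widetilde{\bm x}(0)$ and then to control the difference of the two associated matrices. By \eqref{cenper} the centralized optimum is $J_c^*=\widetilde{\bm x}(0)^\top\widetilde{\bm Q}\widetilde{\bm x}(0)$, and completing the square in the algebraic Riccati equation \eqref{central} with $\bm F_c=-\bm Q_{\bm u\bm u}^{-1}\bm B^\top\widetilde{\bm Q}$ from \eqref{cenfeed} shows that $\widetilde{\bm Q}$ is also the unique positive semidefinite solution of the closed-loop Lyapunov equation
\begin{equation*}
(\widetilde{\bm A}+\bm B\bm F_c)^\top\widetilde{\bm Q}+\widetilde{\bm Q}(\widetilde{\bm A}+\bm B\bm F_c)+\bm Q_{\bm x\bm x}+\bm F_c^\top\bm Q_{\bm u\bm u}\bm F_c=\bm 0 .
\end{equation*}
On the distributed side, feeding $\bm u_d=\bm F_d\widetilde{\bm x}$ into \eqref{original} yields the trajectory \eqref{solution} driven by $\bm A_d:=\widetilde{\bm A}+\bm B\bm F_d$, which is Hurwitz by \eqref{closed-loop matrix}; hence the incurred cost equals $J_d^*=\widetilde{\bm x}(0)^\top\bm P_d\widetilde{\bm x}(0)$, where $\bm P_d\succeq\bm 0$ is the unique solution of
\begin{equation*}
\bm A_d^\top\bm P_d+\bm P_d\bm A_d+\bm Q_{\bm x\bm x}+\bm F_d^\top\bm Q_{\bm u\bm u}\bm F_d=\bm 0 .
\end{equation*}

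Next I would subtract the two identities. Writing $\widetilde{\bm A}+\bm B\bm F_c=\bm A_d+\bm B(\bm F_c-\bm F_d)$ and substituting $\bm B^\top\widetilde{\bm Q}=-\bm Q_{\bm u\bm u}\bm F_c$, all the cross terms telescope and the difference $\bm E:=\bm P_d-\widetilde{\bm Q}$ satisfies the Lyapunov equation
\begin{equation*}
\bm A_d^\top\bm E+\bm E\bm A_d+(\bm F_d-\bm F_c)^\top\bm Q_{\bm u\bm u}(\bm F_d-\bm F_c)=\bm 0 .
\end{equation*}
Since $\bm A_d$ is Hurwitz and $\bm S:=(\bm F_d-\bm F_c)^\top\bm Q_{\bm u\bm u}(\bm F_d-\bm F_c)\succeq\bm 0$, its unique solution admits the integral representation $\bm E=\int_0^\infty e^{\bm A_d^\top t}\bm S\,e^{\bm A_d t}\,\mathrm{d}t\succeq\bm 0$, which immediately gives $J_d^*-J_c^*=\widetilde{\bm x}(0)^\top\bm E\widetilde{\bm x}(0)\ge 0$. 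Strict positivity I would obtain from uniqueness of the centralized LQR minimizer (guaranteed by $\bm Q_{\bm u\bm u}\succ 0$ together with the existence of the stabilizing solution $\widetilde{\bm Q}$): the block-diagonal $\bm F_d$ is not equal to $\bm F_c$, so $\bm u_d$ is an admissible stabilizing but non-optimal control; equivalently $\bm S\ne\bm 0$ and $\widetilde{\bm x}(0)$ is not contained in the unobservable subspace of $(\bm A_d,\bm S^{1/2})$, whence $\widetilde{\bm x}(0)^\top\bm E\widetilde{\bm x}(0)>0$.

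For the upper bound I would combine the integral form of $\bm E$ with the explicit trajectory \eqref{solution}: since $e^{\bm A_d t}\widetilde{\bm x}(0)=\widetilde{\bm x}(t)$,
\begin{equation*}
J_d^*-J_c^*=\int_0^\infty\widetilde{\bm x}(t)^\top\bm S\,\widetilde{\bm x}(t)\,\mathrm{d}t\le\|\bm S\|_2\int_0^\infty\|\widetilde{\bm x}(t)\|^2\,\mathrm{d}t ,
\end{equation*}
and then invoke the closed-loop decay rate $\|\widetilde{\bm x}(t)\|^2\le e^{-\beta t}\|\widetilde{\bm x}(0)\|^2$ with $\beta=-2\max_i\mathrm{Re}[\lambda_i(\bm A_d)]>0$ to obtain $\int_0^\infty\|\widetilde{\bm x}(t)\|^2\,\mathrm{d}t\le\beta^{-1}\|\widetilde{\bm x}(0)\|^2$, yielding \eqref{performance}. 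The main obstacle is exactly this last decay estimate: for a non-normal Hurwitz $\bm A_d$ one only has $\|e^{\bm A_d t}\|_2\le\kappa\,e^{-\beta t/2}$ with a transient constant $\kappa\ge 1$, so a genuinely constant-free bound requires either a normality-type argument for $\bm A_d$ or replacing the spectral abscissa by the logarithmic norm $\mu_2(\bm A_d)=\tfrac12\lambda_{\max}(\bm A_d+\bm A_d^\top)$. I would therefore either argue, using the block structure \eqref{statematrix}--\eqref{inputmatrix} and the strictly positive damping ratios $d_i/m_i,\,k_i/\tau_i$, that the symmetric part of $\bm A_d$ is negative definite with $\lambda_{\max}(\bm A_d+\bm A_d^\top)\le-\beta$ (so $\kappa=1$ is admissible), or otherwise carry the factor $\kappa^2$ through into the right-hand side of \eqref{performance}.
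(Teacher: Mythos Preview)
Your approach is essentially the same as the paper's: both arrive at the identity
\[
J_d^*-J_c^*=\int_0^\infty \widetilde{\bm x}(t)^\top(\bm F_d-\bm F_c)^\top\bm Q_{\bm u\bm u}(\bm F_d-\bm F_c)\,\widetilde{\bm x}(t)\,\mathrm{d}t
\]
via the completed-square form of the centralized Riccati equation, and then bound the integrand by $\|\bm S\|_2\|\widetilde{\bm x}(t)\|^2$ followed by the exponential decay estimate $\|e^{\bm A_d t}\|_2^2\le e^{-\beta t}$. The only cosmetic difference is that you package the computation as the Lyapunov equation $\bm A_d^\top\bm E+\bm E\bm A_d+\bm S=\bm 0$ for $\bm E=\bm P_d-\widetilde{\bm Q}$, whereas the paper manipulates the integrand directly; the algebra is identical.

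Your caution about the decay bound is well placed: the paper simply \emph{asserts} $\|\exp(\bm A_d t)\|_2^2\le\exp(-\beta t)$ with $\beta$ defined from the spectral abscissa, without invoking normality or a logarithmic-norm argument, so it does not resolve the transient-constant issue you raise. Likewise, for strict positivity the paper argues only that $\bm Q_{\bm u\bm u}\succ0$ and $\bm F_d\ne\bm F_c$ imply $R_p>0$, which, as you note, tacitly assumes $\widetilde{\bm x}(0)$ is not in the unobservable subspace of $(\bm A_d,\bm S^{1/2})$. In short, your proposal matches the paper's proof and is in fact more explicit about its two soft spots; neither point is sharpened further in the paper itself.
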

\begin{proof} Define 
    \begin{equation*}
        \begin{aligned}
            \widetilde{\bm{Q}}_{\text{cr}}=\bm Q_{\bm{x}\bm{x}}+
      \bm  F_{d}^\top \bm Q_{\bm{u}\bm{u}}\bm F_{d}.
        \end{aligned}
    \end{equation*}
    \par
    Substituting the Riccati equation \eqref{central} and the feedback matrix \eqref{cenfeed} of the centralized LQR problem into the above formula, we obtain
    \begin{equation*}
        \begin{aligned}
          \widetilde{\bm{Q}}_{\text{cr}}&= 
          -\widetilde{\bm Q}\widetilde{\bm A}\!-\!\widetilde{\bm A}^\top\widetilde{\bm Q}+\widetilde{\bm Q}\bm{B}\bm{Q}_{\bm{u}\bm{u}}^{-1}\bm B^\top\widetilde{\bm Q}+
      \bm F_d^\top \bm Q_{\bm{u}\bm{u}}\bm F_d\\
      &= -\widetilde{\bm Q}\widetilde{\bm A}\!-\!\widetilde{\bm A}^\top\widetilde{\bm Q}+\bm F_c^\top\bm{Q}_{\bm{u}\bm{u}}{\bm F_c}+
      \bm F_d^\top \bm Q_{\bm{u}\bm{u}}\bm F_d\\
      &=-\widetilde{\bm Q}(\widetilde{\bm A}+\bm B\bm F_d)\!-\!(\widetilde{\bm A}+\bm B\bm F_d)^\top\widetilde{\bm Q}\\
      &~~~+\bm F_c^\top\bm{Q}_{\bm{u}\bm{u}}\bm F_c+
      \bm F_{d}^\top \bm Q_{\bm{u}\bm{u}}\bm F_{d}+\widetilde{\bm Q}\bm B\bm F_{d}+\bm F_{d}^\top\bm B^\top\widetilde{\bm Q}\\
      &=-\widetilde{\bm Q}(\widetilde{\bm A}+\bm B\bm F_d)\!-\!(\widetilde{\bm A}+\bm B\bm F_d) ^\top\widetilde{\bm Q}\\
      &~~~+(\bm F_{d}-\bm F_{c})^\top\bm{Q}_{\bm{u}\bm{u}}(\bm F_{d}-\bm F_{c}).
        \end{aligned}
    \end{equation*} 
    \par
   By the definition of the cost function \eqref{cost function}, the optimal quadratic cost of the distributed LQR problem is 
    \begin{equation*}
        \begin{aligned}
           J_{d}^*&=\int_{0}^\infty
      \widetilde{\bm{x}}(t)^\top \widetilde{\bm{Q}}_{\text{cr}}\widetilde{\bm{x}}(t)\text{d}t\\
      &=-\int_{0}^\infty\frac{\text{d}}{\text{d}t}\bigg(\widetilde{\bm{x}}(t)^\top\widetilde{\bm Q}\widetilde{\bm{x}}(t)\bigg)\text{d}t+R_p\\
      &=\widetilde{\bm{x}}(0)^\top\widetilde{\bm Q}\widetilde{\bm{x}}(0)+R_p=J_{c}^*+R_p,
     \end{aligned}
    \end{equation*}
  where the third equality holds due to \eqref{solution}. Define
\[
R_p = \int_{0}^\infty \widetilde{\bm x}(t)^\top
(\bm F_{d}-\bm F_{c})^\top \bm Q_{\bm{u}\bm{u}} (\bm F_{d}-\bm F_{c})
\widetilde{\bm x}(t)\, \mathrm{d}t,
\]
which represents the difference of the optimal cost. Because $\bm{Q}_{\bm{u}\bm{u}}\succ 0$ and $\bm F_{d}-\bm F_{c}\neq \bm 0$, we have $R_p>0$.
    \par  
For all $t>0$, if \eqref{closed-loop matrix} holds, then
\begin{equation}\label{expbound}
    \|\exp{\big((\widetilde{\bm A} + \bm B \bm F_d) t\big)}\|_2^2 \le \exp(-\beta t).
\end{equation}
Substituting \eqref{solution} and \eqref{expbound} into $R_p$, we calculate its upper bound,
\begin{equation*}
    \begin{aligned}    
         R_p&\le\|(\bm F_{d}-\bm F_{c})^\top\bm{Q}_{\bm{u}\bm{u}}(\bm F_{d}-\bm F_{c})\|_2
        \|\widetilde{\bm x}(0)\|^2\int_{0}^\infty \exp(-\beta t)\text{d}t\\
        &=\frac{\|(\bm F_{d}-\bm F_{c})^\top\bm{Q}_{\bm{u}\bm{u}}(\bm F_{d}-\bm F_{c})\|_2}{\beta}\|\widetilde{\bm x}(0)\|^2,
    \end{aligned}
\end{equation*}
which gives a bound on the optimal performance difference between the distributed and centralized control methods.
\end{proof}
\par
The effects of the distributed and centralized control methods on the optimal performance are quantified by bounds \eqref{performance}. The numerator quantifies the deviation of the feedback matrix between the distributed and the centralized methods, which is determined by the coupling terms $\bm A^{\bm x} + \widehat{\bm A}$. If these coupling terms are small, then the state matrix of the overall system is close to a block-diagonal form, and the feedback matrix of the distributed control approaches that of the centralized controller, i.e., $\bm F_d \approx \bm F_c$.
\par
The denominator $\beta$ of the upper bound \eqref{performance} reflects the stability of the distributed closed-loop system. A larger $\beta$ corresponds to a faster response of the nonlinear system under small disturbances, which leads to smaller optimal performance difference between the distributed and centralized control methods.
\par
Both distributed and centralized controllers ensure local asymptotic stability under conditions \eqref{closed-loop matrix} and \eqref{central loop matrix}.
However, the distributed controller shows suboptimal performance because it relies solely on local information, although it reduces communication and computational requirements.
When the coupling terms of adjacent levels are small or when the distributed controller has a large $\beta$, the performance of the distributed control approach is close to that of the centralized approach.
\section{Conclusion}\label{sec:conclusion}
This paper proposes a multilevel control system for a large-scale distribution network with high RES penetration, using ESSs as buffers at higher levels. 
Within this framework, a distributed LQR control strategy is developed to reduce computational cost and communication overhead caused by system interconnections.
The asymptotic stability of the system is established using Lyapunov-based analysis. In addition, an analytical bound is derived to quantify the optimal performance difference relative to a centralized control method.
\par
Future work will focus on (1) extending the proposed framework to time-varying and uncertain system parameters, (2) incorporating additional physical constraints and operating conditions, and (3) evaluating the performance of distributed control strategies under parameter fluctuations.

\bibliographystyle{IEEEtran}        % Include this if you use bibtex
\bibliography{ctrmultdisnet}           % and a bib file

@ARTICLE{Low2013,
  author={Farivar, Masoud and Low, Steven H.},
  journal={IEEE Transactions on Power Systems}, 
  title={Branch Flow Model: Relaxations and Convexification—Part I}, 
  year={2013},
  volume={28},
  number={3},
  pages={2554-2564},
  keywords={Mesh networks;Mathematical model;Integrated circuit modeling;Load modeling;Upper bound;Analytical models;Optimization;Convex relaxation;load flow control;optimal power flow;phase control;power system management},
  doi={10.1109/TPWRS.2013.2255317}}

@ARTICLE{Low2014,
  author={Low, Steven H.},
  journal={IEEE Transactions on Control of Network Systems}, 
  title={Convex Relaxation of Optimal Power Flow—Part I: Formulations and Equivalence}, 
  year={2014},
  volume={1},
  number={1},
  pages={15-27},
  keywords={Mathematical model;Equations;Vectors;Control systems;Biological system modeling;Generators;Cost function;Convex relaxation;optimal power flow;power systems;quadratically constrained quadratic program (QCQP);second-order cone program (SOCP);semidefinite program (SDP);semidefinite relaxation},
  doi={10.1109/TCNS.2014.2309732}}

@ARTICLE{Tomsovic,
  author={Tomsovic, K. and Bakken, D.E. and Venkatasubramanian, V. and Bose, A.},
  journal={Proceedings of the IEEE}, 
  title={Designing the Next Generation of Real-Time Control, Communication, and Computations for Large Power Systems}, 
  year={2005},
  volume={93},
  number={5},
  pages={965-979},
  keywords={Real time systems;Communication system control;Control systems;Power system control;Power systems;Power system interconnection;Power grids;Mesh generation;Power generation;Power system economics;Distributed control;power systems;quality of service (QoS);Supervisory Control and Data Acquisition (SCADA);wide-area control},
  doi={10.1109/JPROC.2005.847249}}

@ARTICLE{Varaiya1985,
  author={Varaiya, P. and Wu, F.F. and Rong-Liang Chen},
  journal={Proceedings of the IEEE}, 
  title={Direct methods for transient stability analysis of power systems: Recent results}, 
  year={1985},
  volume={73},
  number={12},
  pages={1703-1715},
  }

@article{Rick,
title = {Stability and control of power systems with high penetrations of inverter-based resources: An accessible review of current knowledge and open questions},
journal = {Solar Energy},
volume = {210},
pages = {149-168},
year = {2020},
note = {Special Issue on Grid Integration},
issn = {0038-092X},
author = {Rick Wallace Kenyon and Matthew Bossart and Marija Marković and Kate Doubleday and Reiko Matsuda-Dunn and Stefania Mitova and Simon A. Julien and Elaine T. Hale and Bri-Mathias Hodge},
keywords = {Renewable energy, Inverter-based resources, Power electronic converters, Synchronous machine converters, Power system stability},
}

@ARTICLE{Alobaidi,
  author={Alobaidi, Abdulraheem H. and Fazlhashemi, Seyed Saeed and Khodayar, Mahdi and Wang, Jianhui and Khodayar, Mohammad E.},
  journal={IEEE Transactions on Sustainable Energy}, 
  title={Distribution Service Restoration With Renewable Energy Sources: A Review}, 
  year={2023},
  volume={14},
  number={2},
  pages={1151-1168},
  keywords={Distribution networks;Costs;Switches;Uncertainty;Network topology;Load flow;Renewable energy sources;Climate change;Distribution service restoration;renewable energy sources;self-healing;artificial intelligence (AI)},
  doi={10.1109/TSTE.2022.3199161}}

@ARTICLE{Panigrahi,
  author={Panigrahi, Ramanuja and Mishra, Santanu K. and Srivastava, Suresh C. and Srivastava, Anurag K. and Schulz, Noel N.},
  journal={IEEE Transactions on Industry Applications}, 
  title={Grid Integration of Small-Scale Photovoltaic Systems in Secondary Distribution Network—A Review}, 
  year={2020},
  volume={56},
  number={3},
  pages={3178-3195},
  keywords={Standards;Microgrids;Power electronics;Synchronization;Photovoltaic systems;Batteries;Current control;dc–ac power converters;distributed power generation;hybrid power systems;inverters;microgrids;photovoltaic (PV) systems;power distribution;pulsewidth modulation (PWM);PV integration;smart grids;solar power generation},
  doi={10.1109/TIA.2020.2979789}}

@ARTICLE{Rahman,
  author={Rahman, Md. Mizanur and Mohamed, Yasser Abdel-Rady I.},
  journal={IEEE Transactions on Power Systems}, 
  title={Dynamic Analysis and Active Stabilization of a Utility-Scale Grid-Connected Current-Source Inverter-Based {PV} System Considering Source Dynamics}, 
  year={2022},
  volume={37},
  number={6},
  pages={4590-4604},
  keywords={Stability analysis;Power system dynamics;Power system stability;Resistance;Current control;Damping;Dynamic resistance;current-source inverter;dc-link current control;small-signal state-space model;active damping},
  doi={10.1109/TPWRS.2022.3148983}}

@ARTICLE{Markovic,
  author={Markovic, Uros and Stanojev, Ognjen and Aristidou, Petros and Vrettos, Evangelos and Callaway, Duncan and Hug, Gabriela},
  journal={IEEE Transactions on Power Systems}, 
  title={Understanding Small-Signal Stability of Low-Inertia Systems}, 
  year={2021},
  volume={36},
  number={5},
  pages={3997-4017},
  keywords={Power system stability;Power system dynamics;Stability criteria;Voltage-source converters;Power conversion;Synchronous generators;Differential algebraic equations;Differential-algebraic equations;low-inertia systems;small-signal stability;synchronous generator;voltage source converter},
  doi={10.1109/TPWRS.2021.3061434}}

@ARTICLE{Molzahn,
  author={Molzahn, Daniel K. and Dörfler, Florian and Sandberg, Henrik and Low, Steven H. and Chakrabarti, Sambuddha and Baldick, Ross and Lavaei, Javad},
  journal={IEEE Transactions on Smart Grid}, 
  title={A Survey of Distributed Optimization and Control Algorithms for Electric Power Systems}, 
  year={2017},
  volume={8},
  number={6},
  pages={2941-2962},
  keywords={Optimization;Mathematical model;Distributed algorithms;Control systems;Approximation algorithms;Reactive power;Distributed optimization;online optimization;electric power systems},
  doi={10.1109/TSG.2017.2720471}}

@ARTICLE{Sachs,
  author={Sachs, Julia and Sawodny, Oliver},
  journal={IEEE Transactions on Sustainable Energy}, 
  title={A Two-Stage Model Predictive Control Strategy for Economic Diesel-{PV}-Battery Island Microgrid Operation in Rural Areas}, 
  year={2016},
  volume={7},
  number={3},
  pages={903-913},
  keywords={Microgrids;Robustness;Uncertainty;Rural areas;Optimal control;Predictive control;Optimal control;model predictive control;microgrid;Optimal control;model predictive control;microgrid},
  doi={10.1109/TSTE.2015.2509031}
}

@article{Schiffer2014,
author = {Johannes Schiffer and Romeo Ortega and Alessandro Astolfi and Jörg Raisch and Tevfik Sezi},
title = {Conditions for stability of droop-controlled inverter-based microgrids},
journal = {Automatica},
volume = {50},
number = {10},
pages = {2457-2469},
year = {2014},
issn = {0005-1098},
keywords = {Microgrid control, Microgrid stability, Smart grid applications, Inverters, Droop control, Port-Hamiltonian systems, Power sharing},
}

@ARTICLE{Colombino,
  author={Colombino, Marcello and Groß, Dominic and Brouillon, Jean-Sébastien and Dörfler, Florian},
  journal={IEEE Transactions on Automatic Control}, 
  title={Global Phase and Magnitude Synchronization of Coupled Oscillators With Application to the Control of Grid-Forming Power Inverters}, 
  year={2019},
  volume={64},
  number={11},
  pages={4496-4511},
  keywords={Oscillators;Inverters;Power system stability;Synchronization;Power system dynamics;Asymptotic stability;Clarke transformation;Kuramoto oscillator;virtual oscillator control (VOC)},
  doi={10.1109/TAC.2019.2898549}}

@ARTICLE{Dominic,
  author={Groß, Dominic and Colombino, Marcello and Brouillon, Jean-Sébastien and Dörfler, Florian},
  journal={IEEE Transactions on Control of Network Systems}, 
  title={The Effect of Transmission-Line Dynamics on Grid-Forming Dispatchable Virtual Oscillator Control}, 
  year={2019},
  volume={6},
  number={3},
  pages={1148-1160},
  keywords={Power system stability;Inverters;Stability analysis;Asymptotic stability;Power system dynamics;Power transmission lines;Transmission line matrix methods;Coupled oscillators;inverters;reduced-order model;synchronization},
  doi={10.1109/TCNS.2019.2921347}}

@ARTICLE{Xiuqiang,
  author={He, Xiuqiang and Häberle, Verena and Subotić, Irina and Dörfler, Florian},
  journal={IEEE Control Systems Letters}, 
  title={Nonlinear Stability of Complex Droop Control in Converter-Based Power Systems}, 
  year={2023},
  volume={7},
  number={},
  pages={1327-1332},
  keywords={Power system stability;Stability analysis;Synchronization;Steady-state;Asymptotic stability;Voltage control;Power system dynamics;Complex droop control;complex-frequency synchronization;stability of nonlinear systems},
  doi={10.1109/LCSYS.2023.3236276}}

@ARTICLE{Kouveliotis,
  author={Kouveliotis-Lysikatos, Iasonas N. and Koukoula, Despina I. and Hatziargyriou, Nikos D.},
  journal={IEEE Transactions on Smart Grid}, 
  title={A Double-Layered Fully Distributed Voltage Control Method for Active Distribution Networks}, 
  year={2019},
  volume={10},
  number={2},
  pages={1465-1476},
  keywords={Voltage control;Reactive power;Load flow;Optimization;Mathematical model;Minimization;Linear programming;Voltage control;reactive power control;distributed algorithms},
  doi={10.1109/TSG.2017.2768239}
}

@ARTICLE{Plytaria,
  author={Antoniadou-Plytaria, Kyriaki E. and Kouveliotis-Lysikatos, Iasonas N. and Georgilakis, Pavlos S. and Hatziargyriou, Nikos D.},
  journal={IEEE Transactions on Smart Grid}, 
  title={Distributed and Decentralized Voltage Control of Smart Distribution Networks: Models, Methods, and Future Research}, 
  year={2017},
  volume={8},
  number={6},
  pages={2999-3008},
  keywords={Voltage control;Decentralized control;Mathematical model;Minimization;Load flow;Smart grids;Microgrids;Decentralized voltage control;distributed voltage control;distribution networks;smart grid;voltage management;voltage regulation},
  doi={10.1109/TSG.2017.2679238}
}

@ARTICLE{Bidgoli,
  author={Bidgoli, Hamid Soleimani and Van Cutsem, Thierry},
  journal={IEEE Transactions on Power Systems}, 
  title={Combined Local and Centralized Voltage Control in Active Distribution Networks}, 
  year={2018},
  volume={33},
  number={2},
  pages={1374-1384},
  keywords={Voltage control;Reactive power;Voltage measurement;Power measurement;Real-time systems;Centralized control;Predictive control;Active distribution network;real-time control;voltage control;model predictive control},
  doi={10.1109/TPWRS.2017.2716407}}

@book{Tanenbaum,
  title={{Computer Networks}},
  author={Tanenbaum, A.S.},
  isbn={9780130661029},
  lccn={2002029263},
  year={2003},
  publisher={Prentice Hall PTR}
}

@ARTICLE{Ciocia,
  author={Ciocia, Alessandro and Boicea, Valentin A. and Chicco, Gianfranco and Di Leo, Paolo and Mazza, Andrea and Pons, Enrico and Spertino, Filippo and Hadj-Said, Nouredine},
  journal={IEEE Transactions on Industry Applications}, 
  title={Voltage Control in Low-Voltage Grids Using Distributed Photovoltaic Converters and Centralized Devices}, 
  year={2019},
  volume={55},
  number={1},
  pages={225-237},
  keywords={Voltage control;Reactive power;Static VAr compensators;Voltage measurement;Substations;Standards;Generators;Low voltage (LV) grid;on load tap changer (OLTC);Pareto front;photovoltaic (PV) system;reactive power;static var compensator (SVC);voltage regulation},
  doi={10.1109/TIA.2018.2869104}}

@Article{Khan,
AUTHOR = {Khan, Shahid Aziz and Wang, Mengqi and Su, Wencong and Liu, Guanliang and Chaturvedi, Shivam},
TITLE = {Grid-Forming Converters for Stability Issues in Future Power Grids},
JOURNAL = {Energies},
VOLUME = {15},
YEAR = {2022},
NUMBER = {14},
ARTICLE-NUMBER = {4937},
ISSN = {1996-1073},
}

@article{dorfler2023,
   author = "D{\"o}rfler, Florian and Groß, Dominic",
   title = {Control of Low-Inertia Power Systems}, 
   journal= "Annual Review of Control, Robotics, and Autonomous Systems",
   year = "2023",
   volume = "6",
   number = "Volume 6, 2023",
   pages = "415-445",
   publisher = "Annual Reviews",
   issn = "2573-5144",
}

@article{Pagnier,
    author = {Pagnier, Laurent AND Jacquod, Philippe},
    journal = {PLOS ONE},
    publisher = {Public Library of Science},
    title ={Inertia location and slow network modes determine disturbance propagation in large-scale power grids},
    year = {2019},
    month = {03},
    volume = {14},
    pages = {1-17},
    number = {3},
}

@article{Vannelli,
title = {Maximal lyapunov functions and domains of attraction for autonomous nonlinear systems},
journal = {Automatica},
volume = {21},
number = {1},
pages = {69-80},
year = {1985},
issn = {0005-1098},
author = {A. Vannelli and M. Vidyasagar},
}

@inbook{Johnson,
  author    = {Horn, Roger A. and Johnson, Charles R.},
  title = {{Matrix Analysis}},
  publisher = {Cambridge University Press},
  year      = {1985},
  section   = {6.1},
}

@inbook{Vidyasagar,
author = {Vidyasagar, M.},
title = {Nonlinear systems analysis (2nd ed.)},
year = {1993},
isbn = {0136234631},
publisher = {Prentice-Hall, Inc.},
note   = {sec.~5.5},
}

@ARTICLE{Venkatasubramanian,
  author={Venkatasubramanian, V. and Schattler, H. and Zaborszky, J.},
  journal={IEEE Transactions on Automatic Control}, 
  title={Fast time-varying phasor analysis in the balanced three-phase large electric power system}, 
  year={1995},
  volume={40},
  number={11},
  pages={1975-1982},
  keywords={Time varying systems;Riccati equations;Power system dynamics;Power system stability;Power system analysis computing;Automatic control;Signal analysis;Differential equations;Voltage;Power system transients},
  doi={10.1109/9.471228}
  }

@book{Trentelman,
  title={{Control Theory for Linear Systems}},
  author={Trentelman, Harry L. and Stoorvogel, Anton A. and Hautus, Malo},
  year={2001},
  publisher={Springer London},
  isbn={978-1-4471-0339-4}
}

@ARTICLE{Kumagai,
  author={Wu, F. and Kumagai, S.},
  journal={IEEE Transactions on Circuits and Systems}, 
  title={Steady-State Security Regions of Power Systems}, 
  year={1982},
  volume={29},
  number={11},
  pages={703-711}
  }

@ARTICLE{Lijun,
  author={Zhou, Xinyang and Farivar, Masoud and Liu, Zhiyuan and Chen, Lijun and Low, Steven H.},
  journal={IEEE Transactions on Automatic Control}, 
  title={Reverse and Forward Engineering of Local Voltage Control in Distribution Networks}, 
  year={2021},
  volume={66},
  number={3},
  pages={1116-1128}
  }

@article{DorflerCriticalcoupling,
author = {D\"{o}rfler, Florian and Bullo, Francesco},
title = {On the Critical Coupling for Kuramoto Oscillators},
journal = {SIAM Journal on Applied Dynamical Systems},
volume = {10},
number = {3},
pages = {1070-1099},
year = {2011}
}

\def\refname{\vadjust{\vspace*{-2.5em}}} %Please don't do this in a real paper.

\end{document}